\numberwithin{equation}{section}
\theoremstyle{plain}
\newtheorem{theorem}{Theorem}[section]
\newtheorem{proposition}{Proposition}[section]
\newtheorem{lemma}{Lemma}[section]
\newtheorem{remark}{Remark}[section]
\newtheorem{corollary}{Corollary}[section]
\newcommand{\mb}{\mathbf}
\begin{document}

%%%%%%%%%%%%%%%%%%% TITLE PAGE %%%%%%%%%%%%%%%%%%%

\title{Is a Finite Intersection of Balls Covered by\\ a Finite Union of Balls in Euclidean Spaces ?}

\author[1]{
Vincent Runge\footnote{E-mail: runge.vincent@gmail.com}}

% List of institutions
\affil[1]{LaMME - Laboratoire de Math\'ematiques et Mod\'elisation d'Evry.\newline UEVE - Universit\'e d'Evry-Val-d'Essonne.}
\date{}
\maketitle

% Abstract of the paper
\begin{abstract}
Considering a finite intersection of balls and a finite union of other balls in an Euclidean space, we propose an exact method to test whether the intersection is covered by the union. We reformulate this problem into quadratic programming problems. For each problem, we study the intersection between a sphere and a Voronoi-like polyhedron. That way we get information about a possible overlap between the frontier of the union and the intersection of balls. If the polyhedra are non-degenerate, the initial nonconvex geometric problem, which is NP-hard in general, is tractable in polynomial time by convex optimization tools and vertex enumeration. Under some mild conditions the vertex enumeration can be skipped. Simulations highlight the accuracy and efficiency of our approach compared with competing algorithms in Python for nonconvex quadratically constrained quadratic programming. This work is motivated by an application in statistics to the problem of multidimensional changepoint detection using pruned dynamic programming algorithms.
\end{abstract}

%\begin{keywords}
Nonconvex quadratically constrained quadratic programming, ball covering problem, computational geometry, Voronoi-like polyhedron, vertex enumeration, polynomial time complexity.
%\end{keywords}

MS classification : 90C26, 52C17, 68U05, 62L10.

\section{Introduction}
\label{Section1}

\subsection{Problem description}
\label{ProbDesc}
We consider two finite sets of balls in an Euclidean space, $\Lambda = \{B_1,...,B_p\}$ and $\rm V = \{\mb{B}_1,...,\mb{B}_q\}$, with $p,q \in \mathbb{N}^*$ and arbitrary centers and radii. We introduce the intersection set $\mathcal{I} = \cap_{i=1}^{p}B_i$ and union set $\mathcal{U} = \cup_{j=1}^{q}\mb{B}_j$. Our problem consists in finding an exact and efficient method to decide whether the inclusion $\mathcal{I} \subset \mathcal{U}$ is true or false. 
Denoting by $\mathcal{U}^{\text{c}}$ the complement of $\mathcal{U}$, this problem is equivalent to the study of the emptiness of $\mathcal{I}\cap \mathcal{U}^{\text{c}}$ which is a challenging question, both theoretically and computationally due to the non-convexity of the $\mb{B}_j^{\text{c}}$ sets ($j=1,...,q$).\\

With $\mathbb{R}^n$ the $n$-dimensional Euclidean space, $n \ge 2$, the open balls $B_i$ and closed balls $\mb{B}_j$ are defined by their centers $c_i,\mb{c}_j \in \mathbb{R}^n$ and radii $R_i,\mb{R}_j \in \mathbb{R}^*_+$ respectively. Thus
$$B_i = \{x \in \mathbb{R}^n \,,\, \|x-c_{i}\|^2 < R_i^2\} \quad\hbox{and}\quad \mb{B}_j = \{x \in \mathbb{R}^n \,,\, \|x-\mb{c}_{i}\|^2 \le \mb{R}_j^2\}\,,$$
where $\|x-c_{i}\|^2 = \sum_{k=1}^{n} (x_k-c_{ik})^2$, with $x = (x_1,...,x_n)^T \in \mathbb{R}^n$,  is the Euclidean norm. We assume that the centers of balls in $\Lambda \cup V$ are all different (non-concentric) and that for all $(B_a,B_b) \in \Lambda^2$ and $(\mb{B}_c,\mb{B}_d) \in V^2$ (with $a \ne b$ and $c \ne d$) we have:
\begin{equation}
\label{ball_constraints}
\left\{
\begin{aligned}
&B_a \cap B_b \ne \emptyset\,, &B_a \cap B_b^{\text{c}} \ne \emptyset\,,\\       
&B_a \cap \mb{B}_c \ne \emptyset\,, &B_a \cap \mb{B}_c^{\text{c}} \ne \emptyset\,,\\
&\mb{B}_c \cap \mb{B}_d^{\text{c}} \ne \emptyset\,.\\
\end{aligned}
\right.
\end{equation}
These conditions can be verified in polynomial time in a preprocessing step in order to avoid unnecessary computations (for example, we remove $B_2$ in $\Lambda$ if $B_1 \cap B_2^{\text{c}} = \emptyset$) or trivial solutions (for example, if $B_1 \cap B_2 = \emptyset$, then $\mathcal{I}\cap \mathcal{U}^{\text{c}} = \emptyset$).
\begin{remark}
We consider open balls $B_i$. This is useful in proofs because we get the set $\mathcal{I}\cap \mathcal{U}^{\text{c}}$ open. With closed balls $B_i$ the decision problem is the same.
\end{remark}

We reformulate our geometric problem into a collection of $q$ quadratically constrained quadratic programming (QCQP) : for $j = 1,...,q$, 
\begin{equation}
\label{QCQP}
P_0(j): \,\left\{
      \begin{aligned}
       &\max_{x \in \mathbb{R}^n} (\|x-\mb{c}_{j}\|^2 - \mb{R}_j^2)\,, \\       
&\hbox{such that} \,\, \|x-c_{i}\|^2 - R_i^2 \le 0\,,\quad i =1,...,p\,,\\
&\quad\quad\quad\quad\,\,\, \|x-\mb{c}_{k}\|^2 - \mb{R}_k^2 \ge 0\,,\quad k =1,...,j-1\,.\\
      \end{aligned}
    \right.   
\end{equation}
We solve these problems sequentially for increasing integers $j$. Notwithstanding the difficult resolution of (\ref{QCQP}), the naive Algorithm \ref{algo1} gives an exact response concerning the geometric inclusion $\mathcal{I} \subset \mathcal{U}$.

\begin{algorithm}
\caption{QCQP-based naive algorithm}
\label{algo1}
\begin{algorithmic}[1]
\STATE {\bf procedure} \textsc{isIncluded}($\Lambda,V$)
   \STATE $response \gets false$, $j \gets 1$
   \WHILE{$j < q+1$}
      \STATE $x^*(j) = {\rm Argmax}\, \{P_0(j)\}$ \hfill $\triangleright$ {We solve the nonconvex QCQP problem $P_0(j)$}
      \IF{$\|x^*(j)-\mb{c}_{j}\|^2 \le \mb{R}_j^2$}\STATE $response \gets true$, $j \gets q$
      \ENDIF
   \STATE $j \gets j+1$
   \ENDWHILE
   \RETURN  $response$
\end{algorithmic}
\end{algorithm}

The feasible region $\mathcal{X}_j$ is the subset of points in $\mathbb{R}^n$ satisfying the constraints in problem $P_0(j)$. If the (nonempty) feasible region $\mathcal{X}_j$ for problem $P_0(j)$ is strictly included into $\mb{B}_j$, that is, $\|x-\mb{c}_{j}\|^2-\mb{R}_j^2 <0$ for all $x \in \mathcal{X}_j$, then $\mathcal{X}_{j+1} = \emptyset$ and $\cup_{k=1}^{j}\mb{B}_k$ covers $\mathcal{I}$. Thus, Algorithm \ref{algo1} tries to cover $\mathcal{I}$ with the balls in $V$, adding them one by one. The algorithm stops as soon as the feasible region becomes empty (justifying the while loop).

Other reformulations are possible. We choose one of them ensuring the feasibility of the considered problem at each step $j$. If feasibility is not required, solving $P_0(q)$ is enougth work. In both cases, the non-convexity of the initial geometric problem is transferred to problems $P_0(j)$ for which the feasible regions (if $q>1$ and $j>1$) and the objective functions (with the standard formulation with a minimization) are nonconvex. For this reason, line search strategies for interior point methods can fail to converge towards the global optimum. On a simple example in Figure \ref{fig:pbexample} we highlight the failing of any point following methods.
\begin{figure}[!ht]
  \begin{minipage}[c]{0.5\textwidth}
  \caption{The dark blue region shows us that $\mathcal{I} \not\subset \mathcal{U}$. Solving $P_0(1)$ stucks the solution in the green region. If only continuous and feasible moves are allowed, as with an interior point method, we would fail to solve $P_0(2)$ as its feasible region is not connected.}
\label{fig:pbexample}
  \end{minipage}\hfill
  \begin{minipage}[c]{0.4\textwidth}
  \center
 \includegraphics[totalheight=0.155\textheight]{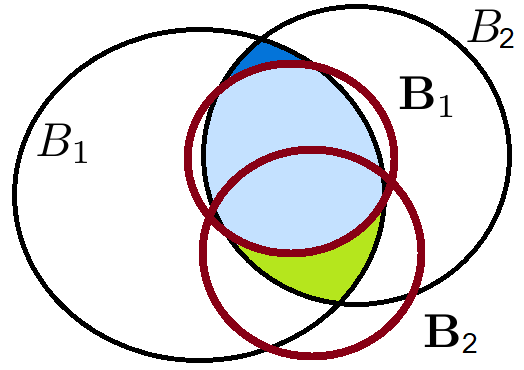} 
  \end{minipage}
  
\end{figure}

For decades, many approximate methods for the generic nonconvex QCQP problems (\ref{QCQP}) have been developed (see the review \cite{Park}) to avoid using primal methods. They are relaxation methods that usually convexify the nonconvex part of the problem or solve successive convex optimization approximate problems. Approaches as semi-definite relaxation (SDR)\cite{Luo}, reformulation linearization technique (RLT)\cite{Anstreicher} or successive convex approximation (SCA)\cite{Mehanna, Scutari} are among the most popular. However, they are often computationally greedy (using $n^2$ unknowns) and only converge towards KKT stationary points. 

We propose in this paper, to the best of our knowledge, the first exact and simple problem-solving method for the subclass of nonconvex QCQP problems involving only balls and complement of balls.

\subsection{Proposed solution}
\label{proposed}
Our QCQP problem is specific as it only involves balls. We can take advantage of the fact that the intersection of two spheres, when they meet, belongs to an hyperplane, which is both concave and convex. Considering a sphere $\mb{S}_j = \partial \mb{B}_j$, where $\partial (\cdot)$ denotes the frontier operator, we build hyperplanes as soon as a sphere $S_i = \partial B_i$ $(i=1,...,p)$ or $\mb{S}_k = \partial \mb{B}_k$ $(k \ne j,\, k =1,...,q)$ intersects $\mb{S}_j$. Each hyperplane defines a favored half-space. The intersection of the obtained half-spaces yields an open convex polyhedron $\mathtt{P}_j$ which can be seen as a Voronoi-like structure.

We introduce the notation $\mathfrak{U}_j = \cup_{k=1\,,\, k \ne j}^{q} \mb{B}_k$ and prove that we are able to detect an intersection between $\mathcal{I}$ and $\mb{S}_j \setminus \mathfrak{U}_j$ only by using the convex polyhedron $\mathtt{P}_j$. This method is based on the set equality $\mb{S}_j \cap (\mathcal{I} \setminus \mathfrak{U}_j) = \mb{S}_j \cap \mathtt{P}_j$. The closure of $\mathtt{P}_j$ is denoted $\Pi_j$ $(\Pi_j =\overline{\mathtt{P}}_j)$. Detection of a nonempty intersection $\mb{S}_j \cap \mathtt{P}_j$ can be handled solving the following $2q$ ($q$ for the minimum and $q$ for the maximum) quadratic programs (QP): for $j = 1,...,q$, 
\begin{equation}
\label{QPpolyhedron}
P_1(j): \,\left\{
      \begin{aligned}
       &{\rm extr}_{x \in \mathbb{R}^n} (\|x-\mb{c}_j\|^2 - \mb{R}_j^2)\,, \\       
&\hbox{such that} \,\,  x \in \Pi_j\,.\\
      \end{aligned}
    \right.   
\end{equation}
The minimum QP problem is tractable in polynomial time \cite{Kozlov, Ye}. Solving $P_1(j)$ for the maximum is a nonconvex (concave) problem, which can be solved by vertex enumeration in polynomial time if the polyhedron $\Pi_j$ is non-degenerate \cite{Avis}. There exists particuliar cases, in practice rarely encountered, for which the vertex enumeration problem remains NP-hard \cite{Freund}. 

In fact, as soon as we find a feasible point strictly inside the ball $\mb{B}_j$ and another one striclty outside, we have $\mb{S}_j \cap \mathtt{P}_j \ne \emptyset$ and we will prove that $\mathcal{I} \setminus \mathcal{U} \ne \emptyset$.

If for all $j$, this intersection $\mb{S}_j \cap \mathtt{P}_j$ is empty, we have $\mathcal{I} \cap ( \mb{S}_j \setminus \mathfrak{U}_j) = \emptyset$ for all $j$, that is $\mathcal{I} \cap \partial \mathcal{U} = \emptyset$. Thus, with our method we shift from the emptiness of set $\mathcal{I} \setminus \mathcal{U}$ to the emptiness of set $\mathcal{I} \cap \partial \mathcal{U}$. To solve the initial covering problem ($\mathcal{I} \setminus \mathcal{U} = \emptyset$ ?), we notice that we only need to know a point inside $\mathcal{I}$ and test whether this point is also inside $\mathcal{U}$ to decide our question.

\subsection{Outline}
\label{outline}
Section \ref{Section2} presents in detail the proposed solution. In particular, we show that the non-feasibility of Problem $P_1(j)$ gives also information on the initial geometric structure. In Appendix \ref{App1} we propose a similar method adapted to covering tests for large $q$ or sequential tests.

With some conditions on the centers and radii of the balls in $\Lambda \cup V$, we are able to ensure that the polyhedron $\Pi_j$ is unbounded and can then skip the maximization problem to only consider the convex quadratic one. We present in Section \ref{Section3} several cases where the only problem to solve is convex. If $q=1$, we introduce the so-called concave QCQP problem and highlight simplified results.

Finally, in Section \ref{Section4}, we compare our approach with recent methods of the Python library 'qcqp' gathering together the best approaches for solving nonconvex QCQP. These simulations highlight the benefit of our method specifically developed for the problem at hand.\\

This work is motivated by an application to a changepoint detection method in statistics as explained in next Subsection \ref{motivation}. This introductory section ends with a bibliographical review. 

\subsection{Motivation}
\label{motivation}

Our covering problem has a direct application for the implementation of the pruned penalized\footnote{It would also work for (non penalized) segment neighborhood method \cite{Auger}.} dynamic programming algorithm for changepoint detection in a multidimensional setting \cite{Fearnhead,Maidstone,Rigaill}. This problem consists in finding the optimal changepoint within the set $\mathbb{S}_m = \{\tau=(\tau_1,...,\tau_k) \in \mathbb{N}^k \,|\, 1 < \tau_1 < \cdots < \tau_k < m+1\}$ such that we minimize a quadratic (for Gaussian modelization) penalized cost (by $\beta>0$):
$$Q_m = \min_{\tau \in \mathbb{S}_m}\left[ \sum_{i=0}^{k}\lbrace \mathcal{C}(y_{\tau_i:\tau_{i+1}-1}) + \beta \rbrace \right],\, \hbox{with}\,\, \mathcal{C}(y_{\tau_i:\tau_{i+1}-1}) = \min_{x \in \mathbb{R}^n} \sum_{i=\tau_i}^{\tau_{i+1}-1} \sum_{k=1}^{n}(x_k- y_{ik})^2\,,$$
and $(y_1,...,y_m)^T \in (\mathbb{R}^n)^m$ the data to segment. Using a dynamic programming procedure, we build the recursion $Q_{t+1}(x) = \min \lbrace Q_{t}(x),\, m_{t}+\beta\rbrace + \sum_{k=1}^{n}(x_k- y_{(t+1)k})^2$ with $m_t = \min_{x} Q_t(x)$ and the initialization $Q_0(x) = 0$. We solve this recursion iteratively from $t=0$ to $t=m-1$.\\
At each $t$, the recursive function $Q_t(\cdot)$ is a piecewise quadratic function defined on $t$ non-overlapping regions $R^1_t,...,R^t_t \subset \mathbb{R}^n$, for which each quadratics is active on a set $R^i_t$ of type "$\mathcal{I}\setminus\mathcal{U}$". Precisely, 
$$R_t^1 = \cap_{i=1}^t B_i^1\,,\,R_t^2 = \cap_{i=2}^t B_i^2 \setminus B_1^1\,,\,R_t^3 = \cap_{i=3}^t B_i^3 \setminus (B_2^1 \cup B_2^2)\,,...,\,S_t^t = R_t^t \setminus (\cup_{j=1}^{t-1}B_{t-1}^j)\,,$$
where all the "$B_l^k$" sets designate balls determined by the data $(y_k,...,y_l)$. At the next iteration $t+1$, each $R_t^i$ is intersected by a new ball (that is, we add a ball in each $\mathcal{I}$) and the set $R_{t+1}^{t+1}$ is created. 

In order to get the global minimum $m_t$ we compare the minima of all present quadratics. To speed-up the procedure, it is worthwhile to search for vanishing sets, that is to detect efficiently the emptiness of sets $R_t^1,...,R_t^t$. In fact, once a set is proved to be empty, we do not need to consider its minimum anymore at any further iteration. This method is called {\it pruning} in the changepoint detection literature. 

In a paper in preparation, we will compare this exact method to heuristic approaches in order to build a fast algorithm for an application to genomic data. Morever, our methods can be extended to other distributions of the natural exponential family.

\subsection{Bibliographical review}

Geometric problems for balls often separately address the intersection and the union problems. Without optimization tools, the detection of a nonempty intersection between balls is difficult to solve. Helly-type theorems can be adapted to balls \cite{Baker, Maehara} but no efficient algorithm arises from this approach. The union of balls is a problem linked in the literature to molecular structures, where the volume and the surface area of molecules in 3D are important properties. Powerful algorithms based on Voronoi diagrams have been recently developed \cite{Avis1988, Cazals}. Even if the number of balls is small, that is more than two, the exact computation of simple geometric properties as volume are challenging questions \cite{Chkhartishvili}. 

One of the first problems to associate union and intersection is the historical disk covering problem, which consists in finding the minimum number of identical disks (with a given radius) needed to cover the unit disk \cite{boroczky}. This problem is still open and remains mainly unsolved, although research on this subject is active \cite{ Acharyya, Das} as it has pratical applications, for example in optical interferometry \cite{Nguyen}.

Our problem also involves covers but is different in several important ways. Indeed, we study the covering of an intersection of balls by other balls which are not necessary of the same size. Furthermore, we do not consider the question of the optimal covering, but only the covering test. This problem is part of computational geometry problems and, as far as we know, is original in the literature. Our reformulation in a QP problem plays a central role as it allows the building of an exact and efficient decision test.

Nonconvex QCQP problems are a major issue for many practical applications: problems of transmit beamforming in wireless communication \cite{Gershman} or signal processing \cite{Huang} have stimulated the development of this research area. The problem we consider in this paper is another example of a problem driven by application.

\section{Equivalent quadratic programming}
\label{Section2}

We first focus on the building of a unique polyhedron  and show its close link with the initial nonconvex sets $\mathcal{I} \setminus \mathfrak{U}_q$. Finally, we show that the $q$ polyhedra and their associated $q$ quadratic programs solve the problem.

\subsection{Linear constraints}
\label{linearconstraints}
From now one, we consider a unique problem centered on the closed ball $\mb{B}_q$ and write $\mb{B} = \mb{B}_q$, $\mb{c} = \mb{c}_q$, $\mb{R} = \mb{R}_q$, $\mathtt{P} = \mathtt{P}_j$, $\Pi = \Pi_q$, $\mathfrak{U} = \mathfrak{U}_q$. For all $i \in \{1,...,p\}$ such that $\mb{S}= \partial \mb{B}$ and $S_i = \partial B_i$ intersect, we have the hyperplane equation $h_i(x)=0$ given by
\begin{equation}
\label{hyperplane}
h_i(x)= \sum_{k=1}^n (2x_k-(\mb{c}_k+c_{ik}))(\mb{c}_k-c_{ik}) + (\mb{R}^2-R_i^2) = 0\,,
\end{equation}
and the open half-space containing the set $B_i \setminus \mb{B}$: $H_i = \{x \in \mathbb{R}^n\,,\, h_i(x) < 0\}$.

The geometric configuration of the balls $\mb{B}$ and $B_i$ with the half-space $H_i$ is given on Figure \ref{fig:balls} (left). All the balls in $\Lambda$ intersect $\mb{B}$ and the inclusion $\mb{B} \subset B_i$ is not excluded (see conditions (\ref{ball_constraints})): in this case, we do not build any hyperplane.

Similar hyperplanes and half-spaces are built between spheres $\mb{S} = \partial\mb{B}$ and $\mb{S}_j = \partial \mb{B}_j$ for $j \in \{1,...,q-1\}$ when they intersect, but here, we consider the half-space containing $\mb{B} \setminus \mb{B}_j$. Therefore,
$$\mb{h}_j(x)= \sum_{k=1}^n (2x_k-(\mb{c}_k+\mb{c}_{jk}))(\mb{c}_k-\mb{c}_{jk}) + (\mb{R}^2-\mb{R}_j^2) = 0$$
and $ \mb{H}_j = \{x \in \mathbb{R}^n\,,\, -\mb{h}_j(x)< 0\}$.

If $\mb{B} \cap \mb{B}_j = \emptyset$, a case not excluded in (\ref{ball_constraints}), we also do not build hyperplane. 
\begin{figure}[!ht]
\center
\includegraphics[totalheight=0.24\textheight]{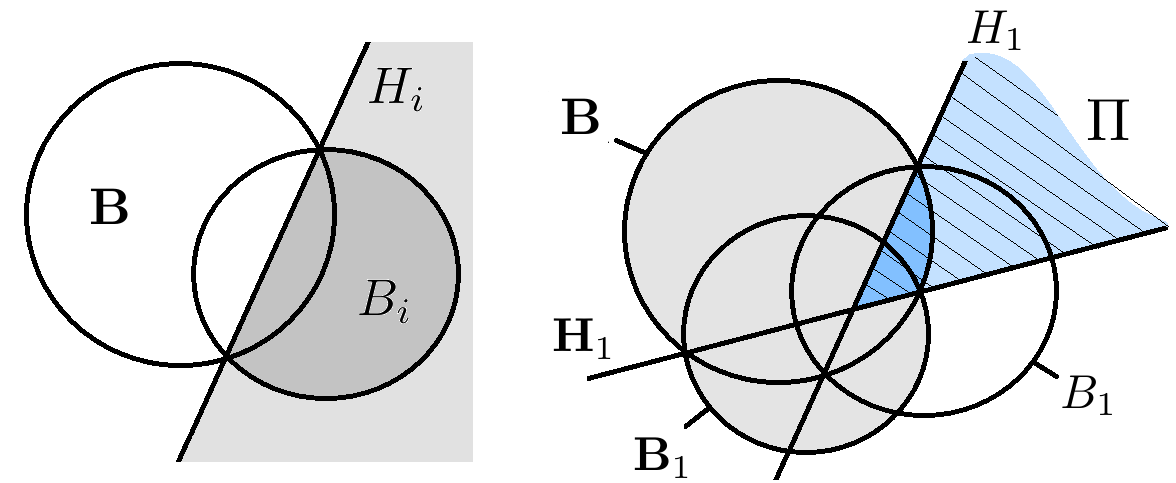} 
\caption{Left: half-space $H_i$ defined by the intersection of the spheres $\mb{S} = \partial \mb{B}$ and $S_i = \partial B_i$ and containing $B_i \setminus \mb{B}$. Right: example of feasible region $\Pi$ (the hatched area) with $p=1$ and $q=2$. Here, $\mathtt{P}^+ = H_1$ and $\mathtt{P}^- = \mb{H}_1$. The darker shade of blue is the subset of $\Pi$ lying inside the ball $\mb{B}$.}
\label{fig:balls}
\end{figure}

With these half-spaces, we define the open convex polyhedron
$$\mathtt{P} = \{x \in \mathbb{R}^n\,,\, h_i(x) < 0\,,\, -\mb{h}_j(x) < 0\,,\, i=1,...,p\,,\, j=1,...,q-1\}$$ 
and the polyhedra $\mathtt{P}^+ = \cap_{i=1}^p H_i$ and $\mathtt{P}^- = \cap_{j=1}^{q-1} \mb{H}_j$ such that $\mathtt{P} = \mathtt{P}^+ \cap \mathtt{P}^-$. The open polyhedron $\mathtt{P}$ will be used to prove geometrical properties, whereas its closure $\Pi = \overline{\mathtt{P}}= \overline{\mathtt{P}^+} \cap \overline{\mathtt{P}^-} = \Pi^+\cap \Pi^-$ is the feasible region of the QP problem $P_1(q)$.

An efficient resolution of the nonconvex problem of type $P_1$ (cf (\ref{QPpolyhedron})) for maximum is made possible, solving for example a vertex enumeration problem \cite{Avis}. An example of feasible region is drawn on Figure \ref{fig:balls} (right).\\

Before proving the equivalence of Problems (\ref{QCQP}) and (\ref{QPpolyhedron}), we present some simple equalities and inclusions used throughout this paper between sets involving balls and hyperplanes.
\begin{lemma}
\label{lemmaI}
For $i \in \{1,...,p\}$ and $j \in \{1,...,q-1\}$, we have the relations: 
\begin{align*} 
&(a)\quad \mb{S}\cap B_i = \mb{S}\cap H_i\,,&
&(b)\quad \mb{S}\cap \mb{B}_j^{\text{c}} = \mb{S}\cap \mb{H}_j\,,\\
&(c)\quad B_i \setminus \mb{B} \subset H_i\,,&
&(d)\quad \mb{B} \setminus \mb{B}_j \subset \mb{H}_j\,.
\end{align*}
\end{lemma}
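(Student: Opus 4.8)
The plan is to reduce all four statements to a single algebraic identity. First I would expand the summand in definition (\ref{hyperplane}) using the difference-of-squares factorization
$$\bigl(2x_k-(\mb{c}_k+c_{ik})\bigr)(\mb{c}_k-c_{ik}) = (x_k-c_{ik})^2 - (x_k-\mb{c}_k)^2\,.$$
Summing over $k$ and adding $\mb{R}^2-R_i^2$ shows that $h_i(x) = (\|x-c_i\|^2-R_i^2) - (\|x-\mb{c}\|^2-\mb{R}^2)$, i.e. $h_i$ is exactly the difference of the two ``power'' functions attached to the balls $B_i$ and $\mb{B}$ (the radical-hyperplane identity). The identical computation gives $\mb{h}_j(x) = (\|x-\mb{c}_j\|^2-\mb{R}_j^2) - (\|x-\mb{c}\|^2-\mb{R}^2)$. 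This single identity is really the whole content of the lemma; everything else is sign bookkeeping.

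For (a) and (b) I would restrict to $\mb{S}$. On the sphere $\mb{S}=\partial\mb{B}$ we have $\|x-\mb{c}\|^2=\mb{R}^2$, so the second bracket vanishes and $h_i(x)=\|x-c_i\|^2-R_i^2$. Hence for $x\in\mb{S}$ the conditions $x\in B_i$ (that is $\|x-c_i\|^2<R_i^2$) and $x\in H_i$ (that is $h_i(x)<0$) are literally the same inequality, which proves (a). Identically, on $\mb{S}$ we get $\mb{h}_j(x)=\|x-\mb{c}_j\|^2-\mb{R}_j^2$, so $x\in\mb{B}_j^{\text{c}}$ (that is $\|x-\mb{c}_j\|^2>\mb{R}_j^2$, a strict inequality because $\mb{B}_j$ is closed) is equivalent to $-\mb{h}_j(x)<0$, i.e. $x\in\mb{H}_j$; this is (b).

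For (c) and (d) I would drop the restriction to $\mb{S}$ and argue from the sign of each bracket. If $x\in B_i\setminus\mb{B}$ then $\|x-c_i\|^2-R_i^2<0$ while $\|x-\mb{c}\|^2-\mb{R}^2>0$, so the identity forces $h_i(x)<0$, proving (c). Likewise, $x\in\mb{B}\setminus\mb{B}_j$ gives $\|x-\mb{c}\|^2-\mb{R}^2\le 0$ and $\|x-\mb{c}_j\|^2-\mb{R}_j^2>0$, whence $\mb{h}_j(x)>0$, i.e. $x\in\mb{H}_j$, proving (d).

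The only point requiring care --- and it is bookkeeping rather than a genuine obstacle --- is the open/closed convention. The openness of $B_i$ supplies the strict inequalities in (a) and (c), while the closedness of $\mb{B}$ is what makes $\mb{B}^{\text{c}}=\{\|x-\mb{c}\|^2-\mb{R}^2>0\}$ an open set and furnishes the strict sign needed in (c); in (d) the mixed $\le$ and $>$ are harmless since we only need the difference to be strictly positive. I would verify the expansion of (\ref{hyperplane}) once, after which all four relations follow immediately.
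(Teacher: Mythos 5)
Your proof is correct and is precisely the intended argument: the paper dismisses this lemma with ``The proof is straightforward,'' and the straightforward proof is exactly your radical-hyperplane identity $h_i(x) = \left(\|x-c_i\|^2-R_i^2\right) - \left(\|x-\mb{c}\|^2-\mb{R}^2\right)$, with the four relations following by restricting to $\mb{S}$ for (a)--(b) and by sign inspection for (c)--(d). Your handling of the open/closed conventions (strictness from $B_i$ open in (a), (c), and from $\mb{B}_j$, $\mb{B}$ closed in (b), (d)) is also exactly right, so there is nothing to add.
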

The proof is straightforward.
 
\subsection{Links between the polyhedron and the initial geometric problem}

We give a set equality linking the convex polyhedron $\mathtt{P}$ and the open set $\mathcal{I} \setminus \mathfrak{U}$ involved in the initial geometric problem.

\begin{corollary} \label{equality}$\mb{S}\cap (\mathcal{I} \setminus \mathfrak{U}) = \mb{S}\cap \mathtt{P}$.
\end{corollary}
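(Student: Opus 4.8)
The plan is to reduce the claimed set identity to the single-sphere relations of Lemma \ref{lemmaI} by an element-chasing argument. First I would rewrite the difference on the left using De Morgan's law,
\[
\mathcal{I}\setminus\mathfrak{U} = \Big(\bigcap_{i=1}^{p} B_i\Big)\cap\Big(\bigcap_{j=1}^{q-1}\mb{B}_j^{\text{c}}\Big),
\]
so that $\mb{S}\cap(\mathcal{I}\setminus\mathfrak{U})$ is precisely the set of points $x\in\mb{S}$ lying in every open ball $B_i$ and in every complement $\mb{B}_j^{\text{c}}$.

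Next I would intersect with $\mb{S}$ and distribute it over the two finite intersections, using $\mb{S}\cap(A\cap B)=(\mb{S}\cap A)\cap(\mb{S}\cap B)$ and the fact that $p\ge 1$ always keeps at least one $\mb{S}$-factor present (so the case $q=1$, where $\mathfrak{U}=\emptyset$, is covered). This brings the problem down to the individual relations $\mb{S}\cap B_i$ and $\mb{S}\cap\mb{B}_j^{\text{c}}$, at which point Lemma \ref{lemmaI}(a) substitutes $\mb{S}\cap H_i$ for each $\mb{S}\cap B_i$ and Lemma \ref{lemmaI}(b) substitutes $\mb{S}\cap\mb{H}_j$ for each $\mb{S}\cap\mb{B}_j^{\text{c}}$. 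Re-collecting the factors would then give
\[
\mb{S}\cap(\mathcal{I}\setminus\mathfrak{U}) = \mb{S}\cap\Big(\bigcap_{i=1}^{p}H_i\Big)\cap\Big(\bigcap_{j=1}^{q-1}\mb{H}_j\Big)=\mb{S}\cap(\mathtt{P}^+\cap\mathtt{P}^-)=\mb{S}\cap\mathtt{P},
\]
which is the desired equality.

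The step I expect to need the most care is the bookkeeping of the indices for which no hyperplane is actually built, i.e. the degenerate configurations left open by (\ref{ball_constraints}). When $\mb{S}$ and $S_i$ fail to meet, the constraints rule out every possibility except $\mb{B}\subset B_i$, so that $\mb{S}\subset B_i$ and $\mb{S}\cap B_i=\mb{S}$; likewise, when $\mb{S}$ and $\mb{S}_j$ fail to meet, only $\mb{B}\cap\mb{B}_j=\emptyset$ survives, giving $\mb{S}\subset\mb{B}_j^{\text{c}}$ and $\mb{S}\cap\mb{B}_j^{\text{c}}=\mb{S}$. In each such case the associated half-space is simply absent from the definition of $\mathtt{P}$ (equivalently, it may be read as all of $\mathbb{R}^n$), and the termwise substitution above remains valid under the convention that a missing factor contributes $\mb{S}$. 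Once this consistency is verified, no further geometry is required and the statement follows as a direct corollary of the single-sphere relations in Lemma \ref{lemmaI}.
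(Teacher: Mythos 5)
Your proposal is correct and follows essentially the same route as the paper's proof: De Morgan's law to write $\mathcal{I}\setminus\mathfrak{U}$ as an intersection of the $B_i$ and the $\mb{B}_j^{\text{c}}$, distribution of $\mb{S}$ over the finite intersections, and termwise substitution via Lemma \ref{lemmaI}(a) and (b). Your extra paragraph checking the indices for which no hyperplane is built (where conditions (\ref{ball_constraints}) force $\mb{B}\subset B_i$, resp.\ $\mb{B}\cap\mb{B}_j=\emptyset$, so the missing factor contributes all of $\mb{S}$) is a point the paper leaves implicit, and handling it explicitly is a welcome refinement rather than a different method.
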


\begin{proof}
$\mb{S}\cap \mathcal{I} = \mb{S}\cap( \cap_{i=1}^pB_i) = \cap_{i=1}^p (\mb{S}\cap B_i)=\cap_{i=1}^p (\mb{S}\cap H_i) = \mb{S}\cap(\cap_{i=1}^p H_i)= \mb{S}\cap \mathtt{P}^+$ by relation (a) of Lemma \ref{lemmaI}. In the same way, we have $\mb{S}\cap \mathfrak{U}^{\text{c}} = \mb{S}\cap (\cup_{j=1}^{q-1} \mb{B}_j)^{\text{c}} = \mb{S}\cap (\cap_{j=1}^{q-1} \mb{B}_j^{\text{c}}) = \cap_{j=1}^{q-1}(\mb{S} \cap \mb{B}_j^{\text{c}})=\cap_{j=1}^{q-1}(\mb{S} \cap \mb{H}_j)= \mb{S}\cap \mathtt{P}^-$ by De Morgan's laws and relation (b) of Lemma \ref{lemmaI}. Therefore, $\mb{S}\cap \mathcal{I} \setminus \mathfrak{U} = (\mb{S}\cap \mathcal{I})\cap (\mb{S}\cap\mathfrak{U}^{\text{c}}) = (\mb{S}\cap \mathtt{P}^+)\cap (\mb{S}\cap  \mathtt{P}^-) = \mb{S}\cap \mathtt{P}$ and the relation is proven.
\end{proof}

This corollary is a central result. The following proposition is a technical topological result used in the next subsection.

For $\epsilon \in \mathbb{R}$, we introduce the ball $\mb{B}^{R+\epsilon}$ with center $\mb{c}$ and radius $\mb{R}+\epsilon$. We also define $\mb{S}^{R+\epsilon} = \partial \mb{B}^{R+\epsilon}$. $V_n$ is the volume (Lebesgue measure) in dimension $n$. In the following proposition, we connect the position of points in the closed feasible region $\Pi$ to the spatial position of $\mathcal{I} \setminus \mathfrak{U}$.

\begin{proposition}
\label{fondprop}
If the volume of the feasible region is nonzero  ($V_n(\Pi)>0$), the following assertions are equivalent:\\
i) $\mb{S}\cap \mathtt{P} \ne \emptyset$,\\
ii) there exists $(x^-, x^+)\in \Pi \times \Pi$ such that $\|x^--\mb{c}\|^2 < \mb{R}^2 < \|x^+-\mb{c}\|^2$,\\
iii) $V_{n-1}(\mb{S} \cap \Pi) >0$ (the Lebesgue measure of the surface area is nonzero),\\
iv) there exists $r > 0$ such that for all $\epsilon \in (-r,r)$, $\mb{S}^{R+\epsilon} \cap (\mathcal{I} \setminus \mathfrak{U}) \ne \emptyset$.
\end{proposition}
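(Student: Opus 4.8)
The plan is to treat assertion (i) as a hub and prove the three equivalences (i)$\Leftrightarrow$(ii), (i)$\Leftrightarrow$(iii) and (i)$\Leftrightarrow$(iv) separately; since the statement is merely a list of equivalences, this suffices. The very first thing I would record is that the hypothesis $V_n(\Pi)>0$ forces $\mathtt{P}=\mathrm{int}(\Pi)$ to be a nonempty open convex set whose closure is $\Pi$ (a standard fact for convex bodies: a convex set of positive Lebesgue measure is full-dimensional). Throughout I would work with the continuous function $f(x)=\|x-\mb{c}\|^2-\mb{R}^2$, whose zero set is exactly $\mb{S}$, and I would exploit that the radii are positive so that $\mb{c}\notin\mb{S}$ and a radial direction $x-\mb{c}\ne 0$ is available at every point of $\mb{S}$.

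For (i)$\Rightarrow$(ii) I would take $x_0\in\mb{S}\cap\mathtt{P}$ and move it radially to $x_0\pm t(x_0-\mb{c})$: for $t>0$ small these points stay in the open set $\mathtt{P}\subset\Pi$ while lying strictly inside and strictly outside $\mb{S}$. For the converse (ii)$\Rightarrow$(i) I would first push $x^{-}$ and $x^{+}$ slightly into $\mathtt{P}$; since $\overline{\mathtt{P}}=\Pi$ and $f$ is continuous, nearby points $\tilde x^{-},\tilde x^{+}\in\mathtt{P}$ still satisfy $f(\tilde x^{-})<0<f(\tilde x^{+})$. Convexity keeps the whole segment $[\tilde x^{-},\tilde x^{+}]$ inside $\mathtt{P}$, and the intermediate value theorem applied to $f$ along it yields a zero of $f$ in $\mathtt{P}$, i.e. a point of $\mb{S}\cap\mathtt{P}$. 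This perturbation-into-the-interior step is exactly where the hypothesis $V_n(\Pi)>0$ is genuinely used.

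For (i)$\Rightarrow$(iii) I would again take $x_0\in\mb{S}\cap\mathtt{P}$, pick an open ball $B(x_0,\delta)\subset\mathtt{P}$, and note that the spherical cap $\mb{S}\cap B(x_0,\delta)$ is a full $(n-1)$-dimensional patch, so $V_{n-1}(\mb{S}\cap\Pi)\ge V_{n-1}(\mb{S}\cap B(x_0,\delta))>0$. The reverse implication (iii)$\Rightarrow$(i) is the measure-theoretic heart of the argument and I would prove it by contraposition: if $\mb{S}\cap\mathtt{P}=\emptyset$ then $\mb{S}\cap\Pi\subset\partial\mathtt{P}$, which is contained in the finite union of the bounding hyperplanes of the polyhedron; each sphere--hyperplane intersection is an $(n-2)$-sphere (or a point, or empty) and hence has vanishing $(n-1)$-measure, forcing $V_{n-1}(\mb{S}\cap\Pi)=0$.

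Finally, for the link with (iv) I would lean on Corollary \ref{equality}, $\mb{S}\cap(\mathcal{I}\setminus\mathfrak{U})=\mb{S}\cap\mathtt{P}$, together with the openness of $\mathcal{I}\setminus\mathfrak{U}$ (an intersection of open balls minus a finite union of closed balls). For (i)$\Rightarrow$(iv), a point $x_0\in\mb{S}\cap\mathtt{P}$ lies in the open set $\mathcal{I}\setminus\mathfrak{U}$, so some $B(x_0,\delta)$ sits inside it; the radial point $x_\epsilon=\mb{c}+\frac{\mb{R}+\epsilon}{\mb{R}}(x_0-\mb{c})$ obeys $\|x_\epsilon-x_0\|=|\epsilon|$ and lies on $\mb{S}^{R+\epsilon}$, so $r=\delta$ gives $\mb{S}^{R+\epsilon}\cap(\mathcal{I}\setminus\mathfrak{U})\ne\emptyset$ for all $\epsilon\in(-r,r)$. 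For (iv)$\Rightarrow$(i) I would simply specialize to $\epsilon=0$, obtaining $\mb{S}\cap(\mathcal{I}\setminus\mathfrak{U})\ne\emptyset$, which the Corollary converts directly into $\mb{S}\cap\mathtt{P}\ne\emptyset$. The main obstacle I anticipate is not a single hard estimate but the careful bookkeeping between the open polyhedron $\mathtt{P}$ and its closure $\Pi$ --- making the perturbation arguments rigorous and isolating precisely where $V_n(\Pi)>0$ enters; conceptually the most delicate point is (iii)$\Rightarrow$(i), which however reduces to the elementary fact that a sphere meets a hyperplane in a set of $(n-1)$-measure zero.
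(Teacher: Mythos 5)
Your proof is correct and follows essentially the same route as the paper's: density of $\mathtt{P}$ in $\Pi$ (guaranteed by $V_n(\Pi)>0$) plus continuity and a segment/intermediate-value argument for (ii)$\Rightarrow$(i), spherical caps for the measure statement, radial scaling for the $\mb{S}^{R+\epsilon}$ statement, and Corollary \ref{equality} to pass to $\mathcal{I}\setminus\mathfrak{U}$ --- the paper simply packages the cap and radial arguments into Lemma \ref{openset} and chains i) $\Leftrightarrow$ iii) $\Leftrightarrow$ iv) through that lemma instead of using (i) as a hub. Incidentally, your contraposition argument for (iii)$\Rightarrow$(i) (the boundary $\partial\mathtt{P}$ lies in finitely many hyperplanes, each meeting $\mb{S}$ in a set of zero $(n-1)$-measure) supplies an explicit justification for the equality $V_{n-1}(\mb{S}\cap\Pi)=V_{n-1}(\mb{S}\cap\mathtt{P})$, which the paper asserts without proof.
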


To state this result, we use the following lemma. 
\begin{lemma}~\\
\label{openset}
With $\mathcal{O}$ an open set of $\mathbb{R}^n$, we have equivalence between propositions:\\
a) $\mb{S} \cap \mathcal{O} \ne \emptyset$;\\
b) $V_{n-1}(\mb{S} \cap \mathcal{O}) >0$;\\
c) there exists $r > 0$ such that for all $\epsilon \in (-r,r)$, $\mb{S}^{R+\epsilon} \cap \mathcal{O} \ne \emptyset$.
\end{lemma}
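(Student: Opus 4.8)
The plan is to establish all equivalences by proving a) $\Leftrightarrow$ b) and a) $\Leftrightarrow$ c) separately, exploiting repeatedly that $\mathcal{O}$ is open and that $\mb{S}=\partial\mb{B}$ is a smooth $(n-1)$-dimensional sphere. Two of the four implications are immediate and I would dispose of them first: b) $\Rightarrow$ a) holds because a set of positive measure is nonempty, and c) $\Rightarrow$ a) holds by simply taking $\epsilon=0$, since $0\in(-r,r)$ and $\mb{S}^{R+0}=\mb{S}$. So the genuine content lies in a) $\Rightarrow$ b) and a) $\Rightarrow$ c), and both start from the same local picture obtained from openness.

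For a) $\Rightarrow$ b), I would fix a point $x_0\in\mb{S}\cap\mathcal{O}$ and use the openness of $\mathcal{O}$ to find $\delta>0$ such that the open Euclidean ball $B(x_0,\delta)$ of radius $\delta$ centered at $x_0$ satisfies $B(x_0,\delta)\subset\mathcal{O}$. Then $\mb{S}\cap B(x_0,\delta)$ is a nonempty relatively open subset (a spherical cap) of $\mb{S}$. Since $\mb{S}$ is a smooth compact $(n-1)$-manifold, around $x_0$ it admits a smooth local chart in which the surface measure $V_{n-1}$ equals a strictly positive density times $(n-1)$-dimensional Lebesgue measure; the preimage of the cap is a nonempty open subset of the chart domain and hence has positive Lebesgue measure. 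Transporting back gives $V_{n-1}(\mb{S}\cap B(x_0,\delta))>0$, and by monotonicity $V_{n-1}(\mb{S}\cap\mathcal{O})\ge V_{n-1}(\mb{S}\cap B(x_0,\delta))>0$, which is b).

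For a) $\Rightarrow$ c), I would again take $x_0\in\mb{S}\cap\mathcal{O}$ and $\delta>0$ with $B(x_0,\delta)\subset\mathcal{O}$, and then project $x_0$ radially onto the nearby spheres. For $\epsilon\in\mathbb{R}$ set $y_\epsilon=\mb{c}+\tfrac{\mb{R}+\epsilon}{\mb{R}}(x_0-\mb{c})$; a direct computation gives $\|y_\epsilon-\mb{c}\|=\mb{R}+\epsilon$, so $y_\epsilon\in\mb{S}^{R+\epsilon}$, together with $\|y_\epsilon-x_0\|=\tfrac{|\epsilon|}{\mb{R}}\|x_0-\mb{c}\|=|\epsilon|$ since $\|x_0-\mb{c}\|=\mb{R}$. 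Hence choosing $r=\min(\delta,\mb{R})$ ensures that for every $\epsilon\in(-r,r)$ the point $y_\epsilon$ lies both in $B(x_0,\delta)\subset\mathcal{O}$ and on $\mb{S}^{R+\epsilon}$, giving $\mb{S}^{R+\epsilon}\cap\mathcal{O}\ne\emptyset$, which is c). Combined with the two trivial implications, this closes the equivalence.

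The only genuinely delicate step is a) $\Rightarrow$ b): it relies on the fact that a nonempty relatively open subset of the sphere carries positive $(n-1)$-dimensional surface measure. I would make this rigorous through the local chart argument sketched above (equivalently, by noting that the surface measure is mutually absolutely continuous with the $(n-1)$-Hausdorff measure restricted to $\mb{S}$, which is strictly positive on nonempty open subsets of the manifold). Everything else is elementary: openness of $\mathcal{O}$ supplies the ambient ball used in both nontrivial directions, and the scaling identity driving a) $\Rightarrow$ c) is a one-line norm computation.
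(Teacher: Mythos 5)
Your proposal is correct and follows essentially the same route as the paper: both fix a point of $\mb{S}\cap\mathcal{O}$, use openness to place a ball $B(x_0,\delta)\subset\mathcal{O}$ around it, and deduce b) and c) from that ball, the converses being trivial. You merely spell out two steps the paper leaves implicit --- the chart argument showing a nonempty spherical cap has positive $V_{n-1}$-measure, and the explicit radial projection $y_\epsilon=\mb{c}+\frac{\mb{R}+\epsilon}{\mb{R}}(x_0-\mb{c})$ with $r=\min(\delta,\mb{R})$ (which also quietly fixes the paper's reversed inclusion $\mb{S}^{R+\epsilon}\cap\mathcal{O}\subset\mb{S}^{R+\epsilon}\cap B(\tilde{x},\rho)$, correctly written as $\supset$, and its omission of the constraint $\mb{R}+\epsilon>0$).
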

\begin{proof}
With $\tilde{x} \in \mb{S} \cap \mathcal{O}$, there exists an open ball $B(\tilde{x},\rho)$ centered on $\tilde{x}$ and of radius $\rho$ such that $B(\tilde{x},\rho) \subset \mathcal{O}$. Thus $V_{n-1}(\mb{S} \cap \mathcal{O}) \ge V_{n-1}(\mb{S} \cap B(\tilde{x},\rho)) >0$ and for all $\epsilon \in (-\rho,\rho)$, $\mb{S}^{R+\epsilon} \cap \mathcal{O}\subset \mb{S}^{R+\epsilon} \cap B(\tilde{x},\rho) \ne \emptyset$. We have shown that $a) \implies b) \implies c)$. The implication $c) \implies a)$ is obvious and the Lemma is proven. 
\end{proof}

\begin{proof}[Proof of Proposition \ref{fondprop}]
The implication $i) \implies ii)$ is due to the openness of $\mathtt{P}$. We prove the converse. As $V_n(\Pi)>0$ and $\Pi$ is convex, we have $\mathtt{P} \ne \emptyset$ and $\overline{\mathtt{P}}=\Pi$. Thus, there exist sequences $(x_n^-)$ and $(x_n^+)$ such that $\lim_{n \to +\infty} x_n^- = x^-$ and $\lim_{n \to +\infty} x_n^+ = x^+$ with $x_n^-,x_n^+ \in \mathtt{P}$ for all $n \in \mathbb{N}$. Therefore, there exists $N\in \mathbb{N}$ such that, for all $n \ge N$, $\|x_n^--\mb{c}\|^2 < \mb{R}^2 < \|x_n^+-\mb{c}\|^2$. By convexity of $\mathtt{P}$, we get $\mb{S} \cap \mathtt{P} \ne \emptyset$. We use the Lemma \ref{openset} ($a \iff b$) with the open set $\mathtt{P}$ to prove the equivalence between $i)$ and $iii)$, knowing that $V_{n-1}(\mb{S} \cap \Pi)= V_{n-1}(\mb{S} \cap \mathtt{P})$. As $\mb{S}\cap (\mathcal{I} \setminus \mathfrak{U}) = \mb{S}\cap \mathtt{P}$ (see Corollary \ref{equality}), we use again Lemma \ref{openset} ($b \iff c$) with the open set $\mathcal{I} \setminus \mathfrak{U}$ to get the equivalence between propositions $iii)$ and $iv)$.
\end{proof}

With this last result, we have shown that some propositions involving the closed polyhedron $\Pi$ are related to results using $\mathtt{P}$ and hence $\mathcal{I} \setminus \mathfrak{U}$.  We now combine this result with the $q$ quadratic programming problem to solve the problem.

\subsection{Quadratic programming decision}
We still focus on a unique quadratic program:
\begin{equation}
\label{QPextr}
P_1(q) = P_1: \,\left\{
      \begin{aligned}
       &{\rm extr}_{x \in \mathbb{R}^n} (\|x-\mb{c}\|^2 - \mb{R}^2)\,, \\       
&\hbox{such that} \,\,  x \in \Pi\,,\\
      \end{aligned}
    \right.   
\end{equation}
and we denote by $x^*_{min}$ (resp. $x^*_{max}$) the value of the argument $x$ for which the objective function attains its minimum (resp. maximum) over the set $\Pi$. Notice that, if $\Pi$ is unbounded, $x^*_{max} = +\infty$ and in some singular cases, the argument of the maximum may not be unique.

The non-existence of a point $x^-$ or $x^+$ satisfying the strict inequalities in Proposition \ref{fondprop} case (i) is related to the resolution of problem $P_1$. For example, there is no point $x^-$ if $\mb{R}^2 \le \|x^*_{min}-\mb{c}\|^2$. Before solving this extremum problem, one should verify the feasibility of the constraints. Studying the feasibility of these constraints, we get some inclusion criteria.

\begin{proposition}~\\
\label{infeasible}
If $V_n(\Pi^+) = 0$, then $\mathcal{I} \subset \mb{B}$ (or $\mathcal{I} = \emptyset$),\\
if $V_n(\Pi^-) = 0$, then $\mb{B} \subset \mathfrak{U}$,\\
if $V_n(\Pi) = 0$, then $\mathcal{I}\cap (\mb{S} \setminus \mathfrak{U}) = \emptyset$.
\end{proposition}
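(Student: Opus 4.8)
The plan is to handle the three implications together, since each rests on the same two-step engine: a volume-to-emptiness reduction for the relevant open polyhedron, followed by a contrapositive geometric argument supplied by Lemma \ref{lemmaI} (and, for the third assertion, directly by Corollary \ref{equality}). Writing $\mathtt{P}^\bullet$ for any one of $\mathtt{P}^+$, $\mathtt{P}^-$, $\mathtt{P}$, I would first note that each is a finite intersection of open half-spaces, hence an open (convex) subset of $\mathbb{R}^n$. A nonempty open set contains a ball and so has strictly positive Lebesgue measure, and since $\mathtt{P}^\bullet \subset \overline{\mathtt{P}^\bullet}$ gives $V_n(\overline{\mathtt{P}^\bullet}) \ge V_n(\mathtt{P}^\bullet)$, the hypothesis $V_n(\Pi^\bullet) = V_n(\overline{\mathtt{P}^\bullet}) = 0$ forces $\mathtt{P}^\bullet = \emptyset$. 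This converts the three measure hypotheses into the emptiness statements $\mathtt{P}^+ = \emptyset$, $\mathtt{P}^- = \emptyset$ and $\mathtt{P} = \emptyset$.

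The third assertion is then immediate: by Corollary \ref{equality}, $\mb{S} \cap (\mathcal{I} \setminus \mathfrak{U}) = \mb{S} \cap \mathtt{P} = \emptyset$, and because $\mathcal{I} \cap (\mb{S} \setminus \mathfrak{U}) = \mb{S} \cap \mathcal{I} \cap \mathfrak{U}^{\text{c}} = \mb{S} \cap (\mathcal{I} \setminus \mathfrak{U})$, this already is the desired conclusion. For the first two I would argue by contraposition. Suppose $\mathcal{I} \not\subset \mb{B}$ and $\mathcal{I} \ne \emptyset$, and pick $x \in \mathcal{I}$ with $x \notin \mb{B}$, so $\|x - \mb{c}\|^2 > \mb{R}^2$. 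Then for each $i$ we have $x \in B_i \cap \mb{B}^{\text{c}} = B_i \setminus \mb{B}$, whence $x \in H_i$ by Lemma \ref{lemmaI}(c); thus $x \in \cap_{i=1}^p H_i = \mathtt{P}^+$, contradicting $\mathtt{P}^+ = \emptyset$. The second assertion is the mirror image: if $\mb{B} \not\subset \mathfrak{U}$, a point $x \in \mb{B} \setminus \mathfrak{U}$ lies in $\mb{B} \setminus \mb{B}_j$ for every $j \le q-1$, hence in $\mb{H}_j$ by Lemma \ref{lemmaI}(d), so $x \in \mathtt{P}^- \ne \emptyset$, again a contradiction.

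The one point demanding care is the bookkeeping of which half-spaces actually occur in $\mathtt{P}^+$ and $\mathtt{P}^-$: a hyperplane is built only when the two spheres meet, so an index $i$ with $\mb{B} \subset B_i$ (a configuration not excluded by (\ref{ball_constraints})) contributes no constraint $H_i$, and likewise an index $j$ with $\mb{B} \cap \mb{B}_j = \emptyset$ contributes no $\mb{H}_j$. I expect this to be the main, if minor, obstacle. It is resolved by observing that every missing constraint is vacuous for the candidate point: when $\mb{B} \subset B_i$ the membership $x \in B_i$ is automatic from $x \in \mathcal{I}$, and when $\mb{B} \cap \mb{B}_j = \emptyset$ the membership $x \in \mb{B}_j^{\text{c}}$ holds for every $x \in \mb{B}$, so omitting these half-spaces only enlarges the polyhedron and cannot invalidate the inclusion. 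With this remark the constructed point genuinely satisfies all the constraints defining $\mathtt{P}^+$ (resp. $\mathtt{P}^-$), closing the contradiction and completing the proof of all three implications.
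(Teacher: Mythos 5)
Your proof is correct and follows essentially the same route as the paper: Lemma \ref{lemmaI}(c) and (d) yield $\mathcal{I}\setminus\mb{B}\subset\mathtt{P}^+$ and $\mb{B}\setminus\mathfrak{U}\subset\mathtt{P}^-$ (your pointwise contrapositive is just the paper's set inclusions unwound), and the third assertion is read off from Corollary \ref{equality}. Your two elaborations --- converting $V_n(\Pi^\bullet)=0$ into $\mathtt{P}^\bullet=\emptyset$ via openness, and checking that indices contributing no hyperplane (e.g.\ $\mb{B}\subset B_i$ or $\mb{B}\cap\mb{B}_j=\emptyset$) impose only vacuous constraints --- are details the paper leaves implicit, and you handle both correctly.
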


\begin{proof}
$\mathcal{I} \setminus \mb{B} = \cap_{i=1}^{p}(B_i \setminus \mb{B}) \subset \cap_{i=1}^{p} H_i  = \mathtt{P}^+$ using relation (c) of Lemma \ref{lemmaI}. If $V_n(\Pi^+) = 0$, then $\mathtt{P}^+ = \emptyset$ and we have $\mathcal{I} \subset \mb{B}$. $\mb{B} \setminus \mathfrak{U} = \cap_{j=1}^{q-1}(\mb{B} \setminus \mb{B}_j) \subset \cap_{j=1}^{q-1} \mb{H}_j  = \mathtt{P}^- $ using relation (d) of Lemma \ref{lemmaI}. If $V_n(\Pi^-) = 0$, then $\mathtt{P}^- = \emptyset$ and we have $\mb{B} \subset \mathfrak{U}$. The last result is a direct application of Corollary \ref{equality}.
\end{proof}

We can now present our main result, showing the equivalent between the resolution of the QP problems $P_1(j)$, $j=1,...,q$, and the decision about $\mathcal{I} \setminus \mathcal{U}$.

\begin{theorem}~\\
\label{theorem}
(A) If there exists $(x^-, x^+)\in \Pi \times \Pi$ such that $\|x^--\mb{c}\|^2 < \mb{R}^2 < \|x^+-\mb{c}\|^2$, then $\mathcal{I} \setminus \mathcal{U} \ne \emptyset$, $V_n(\mathcal{I} \setminus \mathcal{U})>0$ and $V_n(\mathcal{I} \cap \mathcal{U})>0$;\\
(B) if $\mb{R}^2 \le \|x^*_{min}-\mb{c}\|^2$ or $\|x^*_{max}-\mb{c}\|^2 \le \mb{R}^2$, then  $\mathcal{I} \cap (\mb{S}\setminus \mathfrak{U}) = \emptyset$.
\end{theorem}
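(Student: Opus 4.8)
The plan is to prove (A) and (B) separately, relying on Proposition 2.1 (the equivalences i)--iv)) and Corollary 2.1 as the main bridges between the polyhedron $\Pi$ and the geometric set $\mathcal{I}\setminus\mathfrak{U}$. Let me sketch each part.

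For part (A), I start from the hypothesis that there exist $x^-,x^+\in\Pi$ with $\|x^--\mb{c}\|^2<\mb{R}^2<\|x^+-\mb{c}\|^2$. This is exactly assertion (ii) of Proposition 2.1, so (assuming $V_n(\Pi)>0$, which I would first confirm follows automatically from the existence of these two points together with convexity of $\Pi$) I immediately obtain assertion (iv): there is an $r>0$ such that $\mb{S}^{R+\epsilon}\cap(\mathcal{I}\setminus\mathfrak{U})\ne\emptyset$ for all $\epsilon\in(-r,r)$. The key idea is then that this holds not just for the sphere $\mb{S}$ but for a whole one-parameter family of concentric spheres sweeping through radii in $(\mb{R}-r,\mb{R}+r)$; integrating (or taking the union) over $\epsilon$ produces a full-dimensional chunk of $\mathcal{I}\setminus\mathfrak{U}$, giving $V_n(\mathcal{I}\setminus\mathfrak{U})>0$. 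Since $\mathcal{I}\setminus\mathfrak{U}\subset\mathcal{I}\setminus\mathcal{U}$ is not quite what I want --- $\mathcal{U}=\mathfrak{U}\cup\mb{B}$ includes $\mb{B}$ as well --- I must be careful: the points on spheres of radius $>\mb{R}$ lie outside $\mb{B}$, hence in $\mathcal{I}\setminus\mathcal{U}$, which yields $V_n(\mathcal{I}\setminus\mathcal{U})>0$ and in particular $\mathcal{I}\setminus\mathcal{U}\ne\emptyset$. Symmetrically, the points on spheres of radius $<\mb{R}$ lie inside $\mb{B}$ but in $\mathcal{I}\setminus\mathfrak{U}$, hence in $\mathcal{I}\cap\mb{B}\subset\mathcal{I}\cap\mathcal{U}$, giving $V_n(\mathcal{I}\cap\mathcal{U})>0$.

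For part (B), I argue by contraposition. Suppose $\mathcal{I}\cap(\mb{S}\setminus\mathfrak{U})\ne\emptyset$. Note that $\mb{S}\setminus\mathfrak{U}=\mb{S}\cap\mathfrak{U}^{\text{c}}$, so this says $\mb{S}\cap(\mathcal{I}\setminus\mathfrak{U})\ne\emptyset$, which by Corollary 2.1 is the same as $\mb{S}\cap\mathtt{P}\ne\emptyset$, i.e. assertion (i) of Proposition 2.1. This forces $V_n(\Pi)>0$ (a nonempty intersection of the sphere with the open polyhedron $\mathtt{P}$ requires $\mathtt{P}\ne\emptyset$, hence full-dimensional $\Pi$), so Proposition 2.1 applies and gives assertion (ii): there are $x^-,x^+\in\Pi$ with $\|x^--\mb{c}\|^2<\mb{R}^2<\|x^+-\mb{c}\|^2$. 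Since $x^*_{min}$ minimizes and $x^*_{max}$ maximizes $\|x-\mb{c}\|^2$ over $\Pi$, we get $\|x^*_{min}-\mb{c}\|^2\le\|x^--\mb{c}\|^2<\mb{R}^2$ and $\mb{R}^2<\|x^+-\mb{c}\|^2\le\|x^*_{max}-\mb{c}\|^2$. This contradicts both disjuncts of the hypothesis of (B), completing the contrapositive.

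The main obstacle I anticipate is the bookkeeping in part (A) to separate the two conclusions $V_n(\mathcal{I}\setminus\mathcal{U})>0$ and $V_n(\mathcal{I}\cap\mathcal{U})>0$, since Proposition 2.1(iv) only delivers points on the shifted spheres inside $\mathcal{I}\setminus\mathfrak{U}$, and I must carefully track which of these fall inside versus outside $\mb{B}$ to relate them to $\mathcal{U}=\mathfrak{U}\cup\mb{B}$ rather than to $\mathfrak{U}$ alone. A secondary technical point is upgrading the ``nonempty intersection with each sphere'' statement to a genuine positive-volume statement; here I would invoke the openness of $\mathcal{I}\setminus\mathfrak{U}$ (guaranteed by the use of open balls $B_i$, per the Remark) together with Lemma 2.2, so that each nonempty $\mb{S}^{R+\epsilon}\cap(\mathcal{I}\setminus\mathfrak{U})$ actually has positive $(n-1)$-dimensional measure, and the union over $\epsilon\in(-r,r)$ sweeps out positive $n$-dimensional volume.
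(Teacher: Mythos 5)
Your proposal is correct in substance and follows essentially the paper's own route. Part (A) is the paper's argument: apply Proposition~\ref{fondprop} ((ii)~$\Rightarrow$~(iv)), take a point on $\mb{S}^{R+r/2}\cap(\mathcal{I}\setminus\mathfrak{U})$, which lies outside $\mb{B}$ and hence in the \emph{open} set $\mathcal{I}\setminus\mathcal{U}$, and a point on $\mb{S}^{R-r/2}\cap(\mathcal{I}\setminus\mathfrak{U})$, which lies inside $\mb{B}$ and gives $V_n(\mathcal{I}\cap\mathcal{U})>0$; note that openness alone already yields the positive-volume conclusions, so your secondary device of sweeping/integrating the spheres over $\epsilon\in(-r,r)$ is correct but unnecessary. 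In part (B) you argue by contraposition through $\mathcal{I}\cap(\mb{S}\setminus\mathfrak{U})=\mb{S}\cap(\mathcal{I}\setminus\mathfrak{U})=\mb{S}\cap\mathtt{P}$ (Corollary~\ref{equality}) and the implication (i)~$\Rightarrow$~(ii) of Proposition~\ref{fondprop}, then conclude via the optimality of $x^*_{min}$ and $x^*_{max}$, whereas the paper negates (A) to get one-sided emptiness of $\mb{S}^{R+\epsilon}\cap(\mathcal{I}\setminus\mathfrak{U})$ and derives a contradiction from Lemma~\ref{openset}. Your variant is a mild reorganization of the same ingredients, and it is slightly cleaner: in particular you explicitly verify the hypothesis $V_n(\Pi)>0$ of Proposition~\ref{fondprop} in case (B) ($\mb{S}\cap\mathtt{P}\ne\emptyset$ forces the open set $\mathtt{P}$ to be nonempty, hence of positive volume), a point the paper leaves tacit.

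One parenthetical claim in your part (A) is false, although the defect is inherited from the theorem statement rather than introduced by you: the existence of $x^-,x^+\in\Pi$ straddling $\mb{S}$, together with convexity of $\Pi$, does \emph{not} imply $V_n(\Pi)>0$. For $n\ge 2$ a degenerate closed polyhedron contained in a hyperplane (say, a segment crossing the sphere $\mb{S}$) contains such a pair while $V_n(\Pi)=0$; in that situation $\mathtt{P}=\emptyset$, $\overline{\mathtt{P}}\ne\Pi$, and Proposition~\ref{fondprop} is simply not applicable, so the chain (ii)~$\Rightarrow$~(iv) breaks. The paper's own proof of case (A) invokes Proposition~\ref{fondprop} without checking $V_n(\Pi)>0$ either, so you have at worst reproduced (and at least flagged) a non-degeneracy hypothesis missing from the statement of the theorem; to make either proof airtight one must add $V_n(\Pi)>0$ (equivalently, non-degeneracy of $\Pi$) as a standing assumption or treat the degenerate case separately.
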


\begin{proof}
In case (A), Proposition \ref{fondprop} gives the existence of $r > 0$ and $x \in \mathbb{R}^n$ such that $x \in \mb{S}^{R+\frac{r}{2}} \cap (\mathcal{I} \setminus \mathfrak{U})$. In particular, $x \in \mb{S}^{R+\frac{r}{2}}$, then $x \in \mb{B}^{\text{c}}$. Consequently, $x \in (\mathcal{I} \setminus \mathfrak{U}) \cap \mb{B}^{\text{c}} = \mathcal{I} \setminus \mathcal{U}$ (an open set) and $V_n(\mathcal{I} \setminus \mathcal{U}) > 0$. With a point $x' \in \mb{S}^{R-\frac{r}{2}}\cap (\mathcal{I} \setminus \mathfrak{U})$ we get $\mb{B} \cap \mathcal{I} \ne \emptyset$ and $V_n(\mathcal{I} \cap \mathcal{U}) > 0$. Case (B) is the negation of case (A). That is, there exists $r>0$ such that for all $\epsilon \in (0,r)$ or for all $\epsilon \in (-r,0)$, we have $\mb{S}^{R+\epsilon} \cap (\mathcal{I} \setminus \mathfrak{U}) = \emptyset$ using Proposition \ref{fondprop}. Therefore $\mathcal{I} \cap (\mb{S}\setminus \mathfrak{U}) = \emptyset$. Otherwise, with $\mathcal{O} = \mathcal{I} \setminus \mathfrak{U}$ in Lemma \ref{openset} we get  $\mb{S}^{R+\epsilon} \cap (\mathcal{I} \setminus \mathfrak{U}) \ne \emptyset$ for all $\epsilon \in (-r^*,r^*)$ and fixed $r^* > 0$, which is impossible.
\end{proof}

If case (B) is satisfied, the convex set $\mathcal{I}$ does not intersect the frontier $\mb{S}\setminus \mathfrak{U}$ and we consider another reference ball in $V$ to solve a new $P_1$-type problem (\ref{QPpolyhedron}). If for all elements of $V$ we get case (B), we have $\mathcal{I} \cap \partial \mathcal{U} = \emptyset$. Using a point $\hat{x}$ in $\mathcal{I}$, we test whether $\hat{x}$ is included in $\mathcal{U}$ to conclude.

\section{Simplifications}
\label{Section3}

\subsection{Convex reduction}
\label{subsection31}
We show that if some mild conditions are satisfied, the feasible region $\Pi$ of Problem $P_2$ (see (\ref{QPextr})) can be unbounded and the maximum problem in Theorem \ref{theorem} does not have to be solved anymore ($\|x^*_{max}-\mb{c}\| = +\infty$). The problem is reduced to a convex quadratic programming problem which can be solved in polynomial time \cite{Kozlov, Ye}. 
\begin{theorem}
\label{theorem2}
For all $i \in \{1,...,p\}$ and  $j \in \{1,...,q-1\}$, if $R_i^2 < \mb{R}^2+\|\mb{c}-c_i\|^2$ and $\mb{R}_j^2 > \mb{R}^2+\|\mb{c}-\mb{c}_j\|^2$, then $\Pi \setminus \mb{B} \ne \emptyset$ or $\Pi = \emptyset$.
\end{theorem}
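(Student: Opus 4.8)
The plan is to show that, whenever $\Pi$ is nonempty, every ray emanating from the center $\mb{c}$ and passing through a point of $\Pi$ stays inside $\Pi$, so that $\Pi$ is unbounded in the outward radial direction and must eventually leave the ball $\mb{B}$. The starting point is to rewrite each defining function in the ``radical'' form obtained by expanding (\ref{hyperplane}): for every $i$ one has $h_i(x) = (\|x-c_i\|^2-R_i^2) - (\|x-\mb{c}\|^2-\mb{R}^2)$, and similarly $\mb{h}_j(x) = (\|x-\mb{c}_j\|^2-\mb{R}_j^2) - (\|x-\mb{c}\|^2-\mb{R}^2)$. Evaluating at the center gives $h_i(\mb{c}) = \|\mb{c}-c_i\|^2 - R_i^2 + \mb{R}^2$ and $\mb{h}_j(\mb{c}) = \|\mb{c}-\mb{c}_j\|^2 - \mb{R}_j^2 + \mb{R}^2$. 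The two hypotheses $R_i^2 < \mb{R}^2 + \|\mb{c}-c_i\|^2$ and $\mb{R}_j^2 > \mb{R}^2 + \|\mb{c}-\mb{c}_j\|^2$ say exactly that $h_i(\mb{c}) > 0$ and $\mb{h}_j(\mb{c}) < 0$ for all $i,j$; that is, $\mb{c}$ lies strictly on the excluded side of every constraint. In particular $\mb{c} \notin \Pi = \Pi^+ \cap \Pi^-$ as soon as at least one hyperplane is built (the case of no constraints being trivial, since then $\Pi = \mathbb{R}^n$).

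Next I would exploit that $h_i$ and $\mb{h}_j$ are affine. Assume $\Pi \ne \emptyset$ (otherwise the second alternative of the conclusion holds) and pick any $x_0 \in \Pi$; by the previous paragraph $x_0 \ne \mb{c}$. For any affine $f$ and $t \ge 0$, one has $f\bigl(x_0 + t(x_0-\mb{c})\bigr) = f(x_0) + t\bigl(f(x_0)-f(\mb{c})\bigr)$, because $x_0 + t(x_0-\mb{c}) = (1+t)x_0 - t\,\mb{c}$ is an affine combination. Applying this with $f = h_i$ and using $h_i(x_0) \le 0 < h_i(\mb{c})$ gives $h_i(x_0)-h_i(\mb{c}) < 0$, hence $h_i\bigl(x_0 + t(x_0-\mb{c})\bigr) \le h_i(x_0) \le 0$ for all $t \ge 0$. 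Applying it with $f = \mb{h}_j$ and using $\mb{h}_j(x_0) \ge 0 > \mb{h}_j(\mb{c})$ gives the reversed sign and keeps $\mb{h}_j \ge 0$ along the ray. Thus the whole ray $\{x_0 + t(x_0-\mb{c}) : t \ge 0\}$ remains in $\Pi$.

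Finally, since $x_0 \ne \mb{c}$, the distance $\|x_0 + t(x_0-\mb{c}) - \mb{c}\| = (1+t)\|x_0-\mb{c}\|$ tends to $+\infty$, so for $t$ large enough the corresponding point lies outside $\mb{B}$ while staying in $\Pi$; hence $\Pi \setminus \mb{B} \ne \emptyset$, which is the first alternative of the conclusion.

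I expect the only delicate point to be the very first identification: recognizing that expanding (\ref{hyperplane}) turns $h_i$ into the difference of the two ``power'' functions $\|x-c_i\|^2-R_i^2$ and $\|x-\mb{c}\|^2-\mb{R}^2$, which makes the value at $\mb{c}$ coincide verbatim with the hypotheses and reveals $\mb{c}$ as a common apex of all the constraint half-spaces. Once this is seen, the affinity of the constraints makes the recession-ray argument routine; the only bookkeeping remark is to dispose of the degenerate situation in which no hyperplane is built (so that a priori $\mb{c}$ could lie in $\Pi = \mathbb{R}^n$), which is immediate.
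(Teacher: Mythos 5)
Your proposal is correct and takes essentially the same route as the paper's proof: you show the hypotheses say exactly that $h_i(\mb{c})>0$ and $\mb{h}_j(\mb{c})<0$ (so $\mb{c}$ strictly violates every constraint), then use the affinity of the constraints to show the ray from $\mb{c}$ through any $x_0 \in \Pi$ stays in $\Pi$ and eventually leaves $\mb{B}$ --- which is the paper's argument with $\hat{x}+\lambda(\mb{c}-\hat{x})$, $\lambda\to-\infty$, merely reparametrized as $x_0+t(x_0-\mb{c})$, $t\to+\infty$. Your ``power function'' rewriting of $h_i$ and your explicit disposal of the case where no hyperplane is built are minor cosmetic additions to the same argument.
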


\begin{proof}
First, we show that $\mb{c} \not\in \Pi$. Indeed, we have for $i = 1,...,p$,
$$h_i(\mb{c})= \sum_{k=1}^n (2\mb{c}_k-(\mb{c}_k+c_{ik}))(\mb{c}_k-c_{ik}) + (\mb{R}^2-R_i^2) = \mb{R}^2 + \|\mb{c}- c_i\|^2 - R_i^2 > 0\,,$$
and for $j = 1,...,q-1$,
$$-\mb{h}_j(\mb{c})= \sum_{k=1}^n -(2\mb{c}_k-(\mb{c}_k+\mb{c}_{jk}))(\mb{c}_k-\mb{c}_{jk}) - (\mb{R}^2-\mb{R}_j^2) = \mb{R}_j^2 - \mb{R}^2 - \|\mb{c}- \mb{c}_j\|^2 > 0\,.$$ 
using the hypothesis. Second, we prove that, if $\Pi \ne \emptyset$, then $\Pi$ is unbounded. Suppose that there exists $\hat{x} \in \Pi$. We consider the linear functions $g_i : \lambda \mapsto h_i(\hat{x}+\lambda(\mb{c}-\hat{x}))$ and $\mb{g}_i : \lambda \mapsto -\mb{h}_i(\hat{x}+\lambda(\mb{c}-\hat{x}))$ with $\lambda \in \mathbb{R}$. We have $g_i(0)\le 0$, $g_i(1)> 0$ and $\mb{g}_i(0)\le 0$, $\mb{g}_i(1)> 0$, so that these functions are strictly increasing. Thus, for all negative lambda, we have $\hat{x}+\lambda(\mb{c}-\hat{x}) \in \Pi$. Therefore, as $\lim_{\lambda \to -\infty} \|\hat{x}+\lambda(\mb{c}-\hat{x})\| = +\infty$, the set $\Pi \setminus \mb{B}$ can not be empty.
\end{proof}

With only an upper bound on the number of involved balls in $\Lambda \cup V$, we have the same result.

\begin{proposition}
\label{dimensionCondition}
If the number of constraints is less or equal to the dimension, that is $p+q-1 \le n$, then $\Pi$ is unbounded (or empty) and the conclusion of Theorem \ref{theorem2} still holds.
\end{proposition}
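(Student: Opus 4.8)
The plan is to exploit the fact that $\Pi$ is cut out by only $p+q-1$ linear inequalities — exactly the closed half-spaces $\overline{H_i}$ ($i=1,\dots,p$) and $\overline{\mb{H}_j}$ ($j=1,\dots,q-1$) — together with the elementary principle that in $\mathbb{R}^n$ no collection of at most $n$ half-spaces can cut out a bounded nonempty set. First I would identify the outward normals of the constraints: differentiating the affine functions $h_i$ and $-\mb{h}_j$ shows that the normal directions are, up to a positive scalar, $\mb{c}-c_i$ and $\mb{c}_j-\mb{c}$, all nonzero by the non-concentricity assumption. Stacking these as the rows of a matrix $A$ of size $(p+q-1)\times n$, the recession cone of $\Pi$ is exactly $\mathcal{D}=\{d\in\mathbb{R}^n : Ad\le 0\}$ componentwise, since $h_i(x+td)=h_i(x)+2t(\mb{c}-c_i)\cdot d$ and analogously for $\mb{h}_j$.

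The heart of the argument is to produce a nonzero element of $\mathcal{D}$, and here I would split on the rank of $A$. Because $A$ has at most $p+q-1\le n$ rows, $\mathrm{rank}(A)\le n$. If $\mathrm{rank}(A)<n$, then $\ker A\ne\{0\}$ and any nonzero $d\in\ker A$ satisfies $Ad=0\le 0$, so lies in $\mathcal{D}$. If instead $\mathrm{rank}(A)=n$, this forces $p+q-1=n$ with $A$ square and invertible; then $d=-A^{-1}\mathbf{1}$, where $\mathbf{1}$ denotes the all-ones vector, is nonzero and satisfies $Ad=-\mathbf{1}<0$, so again $d\in\mathcal{D}$. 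In both cases $\mathcal{D}\ne\{0\}$.

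With a fixed nonzero recession direction $d$ in hand, I would conclude as follows. If $\Pi=\emptyset$ the statement holds vacuously. Otherwise take any $x_0\in\Pi$; affinity of the constraints gives $h_i(x_0+td)\le h_i(x_0)\le 0$ and $-\mb{h}_j(x_0+td)\le-\mb{h}_j(x_0)\le 0$ for every $t\ge 0$, so the entire ray $\{x_0+td : t\ge 0\}$ stays in $\Pi$; since $\|x_0+td-\mb{c}\|\to+\infty$, for $t$ large enough the point leaves $\mb{B}$, proving both that $\Pi$ is unbounded and that $\Pi\setminus\mb{B}\ne\emptyset$. This is precisely the conclusion of Theorem \ref{theorem2}, so the maximization step may again be skipped and only the convex minimization remains.

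The only genuinely delicate point is the full-rank case $\mathrm{rank}(A)=n$: one might worry that $n$ linearly independent constraints already bound the region, but this is exactly where the hypothesis $p+q-1\le n$ (rather than $\le n-1$) is used, and the explicit direction $-A^{-1}\mathbf{1}$ shows that boundedness still fails. Equivalently, and perhaps worth remarking, $\mathcal{D}$ reduces to $\{0\}$ only when the normals positively span $\mathbb{R}^n$, which requires at least $n+1$ of them; the rest of the argument is routine.
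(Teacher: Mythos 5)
Your proof is correct and follows essentially the same route as the paper's: both split on the rank of the constraint matrix $A$, using a nonzero kernel vector when $\mathrm{rank}(A)<n$ and the explicit direction $-A^{-1}\mathbb{I}$ (the paper phrases it as the feasible curve $x(t)=A^{-1}(b-t\mathbb{I})$) in the square invertible case, then let the resulting ray escape $\mb{B}$. Your recession-cone packaging is only a cosmetic difference; if anything, note that in the full-rank case the paper's points $x(t)$ are feasible by construction, so there one gets $\Pi\ne\emptyset$ for free, whereas you (harmlessly) assume a feasible $x_0$ in both cases.
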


\begin{proof}
We explicit the constraints as a system of linear inequalities, $Ax \le b$, with $A \in \mathbb{R}^{(p+q-1)\times n}$ and $b \in \mathbb{R}^{p+q-1}$. If $A$ is nonsingular, the system $Ax = b - t \mathbb{I}$ with $\mathbb{I}$ the vector filled by ones, has a (unique) solution $x(t)$ and $\lim_{t\to +\infty}\|x(t)\| = \lim_{t\to +\infty}\|A^{-1}(b - t \mathbb{I})\| = +\infty$ with $Ax(t) = b - t \mathbb{I} \le b$. If $A$ is singular, the rank-nullity theorem shows that the linear subspace $Ker(A)$ is of dimension $n-rk(A)>0$ and then $Ker(A) \ne \{0\}$. If a point $x_0 \in \mathbb{R}^n$ satisfies the inequalities, thus, $A(x_0+ty) = Ax_0 \le b$ with $y \in Ker(A) \setminus \{0\}$ and $t\in \mathbb{R}$. We have $\lim_{t\to +\infty}\|x_0+ty\|= +\infty$ and the result is proven.
\end{proof}

\begin{remark}
The conditions of Lemma \ref{theorem2} are sharp. If one of the relations is false, then there exist sets of balls $\Lambda$ and $V$ with $\#(\Lambda\cup V)= p+q = n+2$, such that we have $\Pi \subset \mb{B}$. An example of such a set $\Pi$ is the $n$-dimensional pyramid with summit near point $\mb{c}$ and a basis  obtained with the only one hyperplane that do not satisfies the conditions of the theorem. The necessary condition of Proposition \ref{dimensionCondition} is verified as $p+q > n +1$.
\end{remark}
\begin{theorem}
If there exists $d \in \mathbb{R}^n \setminus \{0\}^n$ such that $(\mb{c}-c_{i})^T d > 0$ and $(\mb{c}-\mb{c}_{j})^T d <0 $ for all $i \in \{1,...,p\}$ and $j \in \{1,...,q-1\}$ (linear separability) then $\Pi$ is unbounded or $\Pi = \emptyset$.
\end{theorem}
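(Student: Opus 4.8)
The plan is to read the separating vector $d$ as a (reversed) recession direction of $\Pi$, so that a nonempty $\Pi$ is forced to be unbounded. First I would record that the defining functions are affine in $x$: expanding (\ref{hyperplane}) and its analogue for $\mb{h}_j$, the coefficient of each $x_k$ shows that $\nabla h_i = 2(\mb{c}-c_i)$ and $\nabla \mb{h}_j = 2(\mb{c}-\mb{c}_j)$. Since the feasible region of $P_1(q)$ is $\Pi = \{x : h_i(x)\le 0,\ -\mb{h}_j(x)\le 0,\ i=1,\dots,p,\ j=1,\dots,q-1\}$, a direction $v$ is a recession direction of $\Pi$ exactly when $(\mb{c}-c_i)^Tv \le 0$ for all $i$ and $(\mb{c}-\mb{c}_j)^Tv \ge 0$ for all $j$.

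Second, I would verify that $v=-d$ satisfies these sign conditions and keeps every feasible point feasible. Assuming $\Pi\ne\emptyset$, fix $x_0\in\Pi$; by affinity, for $t\ge 0$,
\[
h_i(x_0 - t d) = h_i(x_0) - 2t\,(\mb{c}-c_i)^Td, \qquad -\mb{h}_j(x_0 - t d) = -\mb{h}_j(x_0) + 2t\,(\mb{c}-\mb{c}_j)^Td.
\]
The hypotheses $(\mb{c}-c_i)^Td>0$ and $(\mb{c}-\mb{c}_j)^Td<0$ make both right-hand sides $\le$ their value at $t=0$, which are themselves $\le 0$ because $x_0\in\Pi$. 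Hence $h_i(x_0-td)\le 0$ and $-\mb{h}_j(x_0-td)\le 0$ for all $i,j$ and all $t\ge 0$, i.e. $x_0 - td\in\Pi$ for every $t\ge 0$.

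Finally, because $d\ne 0$ we have $\|x_0 - td\|\to+\infty$ as $t\to+\infty$, so $\Pi$ is unbounded (and in particular $\Pi\setminus\mb{B}\ne\emptyset$ since $\mb{B}$ is bounded, matching the conclusion of Theorem \ref{theorem2}); the remaining case is simply $\Pi=\emptyset$. The only delicate point — and the one place a sign error would break the argument — is that the two constraint families are oriented oppositely, $H_i=\{h_i<0\}$ but $\mb{H}_j=\{-\mb{h}_j<0\}$, which is precisely why a single $d$ with $(\mb{c}-c_i)^Td>0$ and $(\mb{c}-\mb{c}_j)^Td<0$ yields one common recession direction $-d$. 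No convexity or compactness machinery beyond this linear computation is required.
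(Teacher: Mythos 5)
Your proposal is correct and follows essentially the same route as the paper: the paper also assumes $\Pi\ne\emptyset$, fixes $\hat{x}\in\Pi$, and moves along $\hat{x}+\alpha d$ with $\alpha<0$ (your $x_0-td$, $t\ge 0$), using the affinity of $h_i$ and $-\mb{h}_j$ and the sign hypotheses to show every constraint value can only decrease, hence $\Pi$ contains an unbounded ray. Your phrasing in terms of a recession direction and your explicit gradient bookkeeping are just a slightly more formal dressing of the identical computation.
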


\begin{proof}
Suppose that there exists $\hat{x} \in \Pi$ and $d$ satisfying the condition of the theorem. Then for all $\alpha < 0$, we have for all the constraints
$$h_i(\hat{x} + \alpha d)= (2\hat{x}-(\mb{c}+c_{i}))^T(\mb{c}-c_{i}) + 2\alpha d^T(\mb{c}-c_{i}) + (\mb{R}^2-R_i^2) \le  h_i(\hat{x}) \le 0\,,$$
$$-\mb{h}_j(\hat{x} + \alpha d)= (2\hat{x}-(\mb{c}+\mb{c}_j))^T(\mb{c}-\mb{c}_j) - 2 \alpha d^T(\mb{c}-\mb{c}_j) - (\mb{R}^2-\mb{R}_j^2) \le  -\mb{h}_j(\hat{x}) \le 0\,$$ 
and $\hat{x} + \alpha d$ definies an infinite direction inside the polyhedron $\Pi$.
\end{proof}

\subsection{The concave QCQP problem}
If there is only one ball in the set $V$ and then no more concave constraint in (\ref{QCQP}), we consider only one QCQP problem of type (\ref{QCQP}) and only one QP problem of type (\ref{QPextr}) ($q=1$). We say that we solve a concave QCQP problem because we minimize the opposite of a convex function over a set of convex constraints.

A first direct consequence is that $\Pi = \Pi^+$, $\Pi^- = \mathbb{R}^n$ and $\mathfrak{U}=\emptyset$. In this particular configuration, some of our results in Section \ref{Section2} and Subsection \ref{subsection31} can be simplified.\\

\texttt{Feasibility}:\\
If $V_n(\Pi) = 0$, then $\mathcal{I} \subset \mb{B}$ (or $\mathcal{I} = \emptyset$).\\

\texttt{Quadratic programming reduction}:\\
(A) If $(x^-, x^+)\in \Pi \times \Pi$ is such that $\|x^--\mb{c}|\|^2 < \mb{R}^2 < \|x^+-\mb{c}\|^2$, then $\mathcal{I} \setminus \mb{B} \ne \emptyset$;\\
(B) if $\mb{R}^2 \le \|x^*_{min}-\mb{c}\|^2$, then  $\mathcal{I} \cap \mb{S} = \emptyset$;\\
(C) if $\|x^*_{max}-\mb{c}\|^2 \le \mb{R}^2$ then $\mathcal{I} \subset \mb{B}$.

\begin{proof}
Only the case (C) has to be proven. $\mathcal{I} \setminus \mb{B} = \cap_{i=1}^p (B_i \setminus \mb{B}) \subset \cap_{i=1}^p (H_i \setminus \mb{B}) = \Pi \setminus \mb{B}$ using Lemma \ref{lemmaI}. However, $\Pi \setminus \mb{B} = \emptyset$ and then $\mathcal{I} \subset \mb{B}$.
\end{proof}

\texttt{Convex optimization reduction}:\\
if $R_i^2 < \mb{R}^2+\|\mb{c}-c_i\|^2$, for all $i \in \{1,...,p\}$, then $\Pi \setminus \mb{B} \ne \emptyset$ or $\Pi = \emptyset$.\\

\section{Simulations}
\label{Section4}
Algorithm \ref{algo2} describes our new procedure based on equivalent QP problems. It is made of two steps: the detection of an intersection between $\mathcal{I}$ and $\partial \mathcal{U}$ using QP problems (see Section \ref{Section2}) and the test $\mathcal{I} \subset \mathcal{U}$ with a unique point (if $\mathcal{I} \cap \partial \mathcal{U} = \emptyset$). We chose to determine $x^*_{min}$ and $x^*_{max}$ for each QP problem but in fact we only need points $x^-$ and $x^+$ as shown by Theorem \ref{theorem}. 

\begin{algorithm}
\caption{QP-based decision algorithm}
\label{algo2}
\begin{algorithmic}[1]
\STATE {\bf procedure} \textsc{IsIncluded}($\Lambda,V$)
   \STATE $response \gets true$, $j \gets 1$
   \WHILE{$j < q+1$}
      \STATE $x^*_{min}(j) = {\rm Argmin}\, \{P_1(j)\}$ \hfill $\triangleright$ {We solve the QP problem $P_1(j)$ for min} 
            \IF{$\|x^*_{min}(j)-\mb{c}_{j}\|^2 < \mb{R}_j^2$}    
	  \STATE $x^*_{max}(j) = {\rm Argmax}\, \{P_1(j)\}$ \hfill $\triangleright$ {We solve the QP problem $P_1(j)$ for max}
      \IF{$\mb{R}_j^2 < \|x^*_{max}(j)-\mb{c}_{j}\|^2$}\STATE $response \gets false$, $j \gets q$
       \ENDIF
      \ENDIF
   \STATE $j \gets j+1$
   \ENDWHILE
    \IF{$response = true$}\STATE Find $\tilde{x} \in \mathcal{I}$
    \IF{$\tilde{x} \not\in \mathcal{U}$}\STATE $response \gets false$ \ENDIF
         \ENDIF
   \RETURN  $response$
\end{algorithmic}
\end{algorithm}

Our goal is to compare the performances of this algorithm against Algorithm \ref{algo1} on two aspects: the exactness of the result and their computational efficiency. We will also illustrate the polynomial time complexity of our new algorithm by increasing the dimension $n$ at fixed $p$ and $q$.\\ 

We have implemented\footnote{The Python code is available on my github 'vrunge' in repository called 'QP\_vs\_QCQP'
 \url{https://github.com/vrunge/QP_vs_QCQP/blob/master/simulations.ipynb}} the Algorithm \ref{algo1} in Python using the recent (2017) {\it suggest and improve} method for nonconvex QCQP \cite{Park} based on packages 'cvxpy' \cite{cvxpy} and its extension 'qcqp'. This allows us to use state-of-the-art methods on Python, more elaborated than interior point one, which is not adapted to our problem (see Figure \ref{fig:pbexample}). The suggest method is set to {\it random} and the improve step to {\it ADMM} (alternating directions method of multipliers \cite{Boyd}) followed by a coordinate descent step. This is the most efficient combinaison of methods (empirically speaking) and is used in some examples given in the package. 
 
The center points for balls are randomly generated with a normal distribution centered on zero with standard deviation $\sigma = 10$ and their radius is the distance to zero plus a quantity $\epsilon = 5$ for intersection ball and $2\epsilon$ for union balls. We also verify that the obtained balls intersect each other as in (\ref{ball_constraints}), if not, we simulate new balls. Examples of simulated data are shown in Figure \ref{fig:dataSimu}.\\

\begin{figure}[!ht]
\center
\includegraphics[totalheight=0.18\textheight]{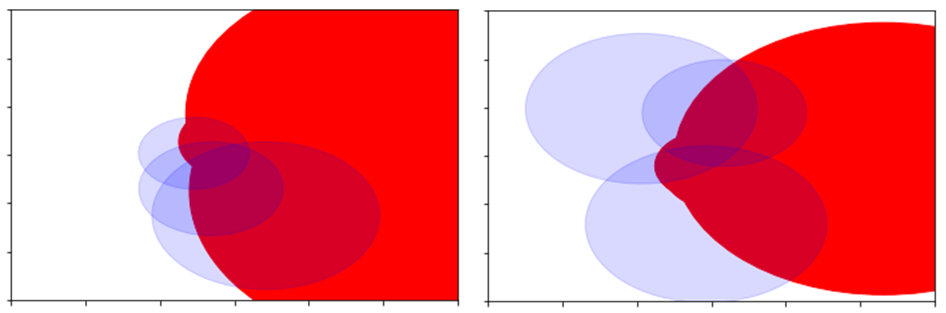} 
\caption{Two examples of simulated data with $p = q = 3$ in dimension $2$. On the left, $\mathcal{I} \not\subset \mathcal{U}$, on the rigtht $\mathcal{I} \subset \mathcal{U}$. The opaque red disks are the balls of the union set $V$. The transparent blue disks are the balls of the intersection set $\Lambda$.}
\label{fig:dataSimu}
\end{figure}

In Table \ref{table1}, we generate 100 examples for each proposed configuration. We first consider examples with $\mathcal{I} \not\subset \mathcal{U}$. In case $(n=2, p =3)$ we compare the results of the two algorithms with the 2d graphical representation of the balls problem to confirm the exactness of our method. 

Algorithm \ref{algo2} always finds the exact result in dimension $2$ and we get the value $100$. For higher dimensions, we know the result of Algorithm \ref{algo2} to be exact (as shown by theoretical results of Section \ref{Section2}) and count the number of identical results between the two algorithms to get the number of true results for Algorithm \ref{algo1}. Algorithm \ref{algo1} often fails, in particular when the non inclusion is difficult to visually detect (obtained on a small volume). This behavior is a consequence of our simulations: we generated data with a unique simulation procedure, so that, for increasing $q$ it becomes harder to generate examples without inclusion. An increasing dimension $n$ gives more space to get a non inclusion and facilitate the detection of $\mathcal{I} \not\subset \mathcal{U}$ for Algorithm \ref{algo1}.

In the case of the inclusion $\mathcal{I} \subset \mathcal{U}$, we notice that Algorithm \ref{algo1} always finds the right answer in our simulations. Interestingly, the resolution of the unique problem $P_0(q)$ often fails, unlike the non inclusion case.

\begin{table}[tbhp]
{
\caption{Number of exact results over 100 simulations}\label{table1}
\begin{center}

\begin{tabular}{|c|c|c|c|c|c|c|c|} \hline
$\mathcal{I} \not\subset \mathcal{U}$ & \multicolumn{3}{ c |}{$n=2$, $p=3$}&\multicolumn{3}{ c |}{$p=3$, $q=3$}& $p=5$, $q=5$  \\ 
 & $q = 1$ & $q = 2$ & $q = 3$&$n = 3$&$n = 5$&$n = 10$&$n = 10$ \\ \hline
Algorithm \ref{algo1} & 83 & 65 & 48 & 57 & 63& 79 & 65 \\
$P_0(q)$ & 83 & 80 & 87 & 91 & 91 & 90 & 92 \\
Algorithm \ref{algo2} & 100 & 100 & 100&$\cdot$&$\cdot$&$\cdot$&$\cdot$ \\ \hline
$\mathcal{I} \subset \mathcal{U}$ &\multicolumn{7}{ c |}{}\\ 
 \hline
Algorithm \ref{algo1} & 100 & 100 & 100 & 100 & 100 & 100 & 100 \\
$P_0(q)$ & 100 & 92 & 81 & 75 & 75 & 83 & 77\\
\hline
\end{tabular}
\end{center}
}
\end{table}

In Table \ref{table2} we highlight the efficiency of our new algorithm compared with the Python package 'qcqp' based on nonconvex QCQP methods. Notice that improvements in the code for Algorithm \ref{algo2} are still possible with a direct implementation of the QP problem stopping as soon as we get $x^-$ or $x^+$ and in the vertex enumeration solver (we simply used package cdd\footnote{website of the Python package \url{http://pycddlib.readthedocs.io/en/latest/}}\cite{Fukuda}). We have made $10$ simulations ($i= 1,...,10$) for $3$ configurations ($n = 5, 10, 20$) with $p=q=3$ in case $\mathcal{I} \not\subset \mathcal{U}$ (simplier to obtain in our ball generation algorithm with increasing dimension $n$). Our results show for the Algorithm \ref{algo1} a computational time complexity of order $O(10^3)$ slower than for Algorithm \ref{algo2} and a higher time variability between iterates.

\begin{table}[tbhp]
{
\caption{Comparison of execution time with $p=q=3$ and varying dimension $n$}\label{table2}
\begin{center}
\begin{tabular}{|c|c|c|c|c|c|c|} \hline
& \multicolumn{2}{ c |}{$n=5$}&\multicolumn{2}{ c |}{$n=10$}& \multicolumn{2}{ c |}{$n=20$}  \\ 
i & time QP & time QCQP & time QP & time QCQP&time QP & time QCQP \\ \hline
1&0.00411&2.013 &0.00448&2.667 &0.00464& 2.923\\
2&0.00280&1.213 &0.00261&2.553 &0.00298& 1.732\\
3&0.00260&1.034 &0.00262&2.358 &0.00305& 2.570\\
4&0.00274&1.044 &0.00262&2.133 &0.00290& 15.10\\
5&0.00279&1.967 &0.00272&2.098 &0.00306& 4.843\\
6&0.00304&1.039 &0.00278&1.402 &0.00294& 5.856\\
7&0.00267&1.639 &0.00267&1.766 &0.00288& 48.80\\
8&0.00261&1.070 &0.00277&8.500 &0.00296& 2.780\\
9&0.00547&0.832 &0.00275&1.803 &0.00375& 2.913\\
10&0.00263&0.564 &0.00286&1.663&0.00666& 2.964\\
\hline
& \multicolumn{6}{ c |}{RATIO mean time QCQP by mean time QP} \\ 
& \multicolumn{2}{ c |}{394.6}&\multicolumn{2}{ c |}{933.0}& \multicolumn{2}{ c |}{2526}  \\ 
\hline
\end{tabular}
\end{center}
}
\end{table}

An empirical polynomial time complexity is confirmed by our simulations as shown in Figure \ref{fig:time}. Simulations have been performed with $\mathcal{I} \not\subset \mathcal{U}$ for an increasing $n$ so that we only solve a unique QP problem for maximum and minimum to detect an intersection. With a linear regression based on the model $t = an^b$, we get a power of about $b = 1.76$ with $R^2 = 0.935$ in the logarithmic scale.

\begin{figure}[!ht]
\center
\includegraphics[totalheight=0.22\textheight]{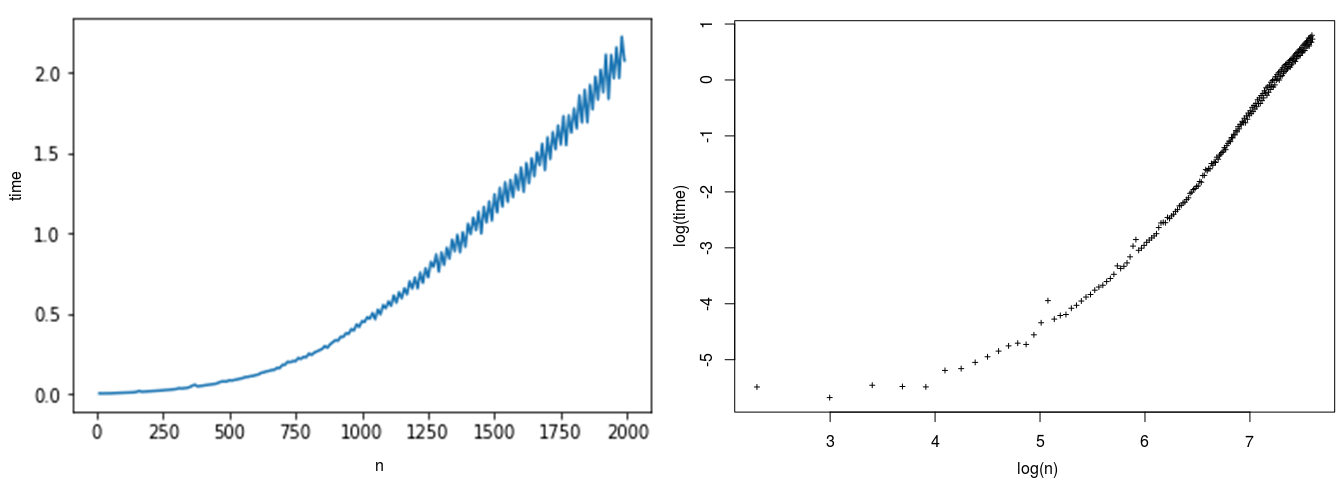} 
\caption{Left: $p = q= 3$ and $n = 10,20,30,...,2000$. Each datapoint is the mean over $10$ identical simulations in case $\mathcal{I} \not\subset \mathcal{U}$. A unique QP problem is solved for each simulation. Right: In the log-log scale, the curve stabilizes corroborating a polynomial time complexity of type $an^b$. A linear model return a coefficient of determination of $0.935$ with $b \approx 1.76$.}
\label{fig:time}
\end{figure}

\section*{Conclusion}
\label{conclusion}
The geometric question of the cover of an intersection of balls by an union of other balls was addressed using optimization tools. The collection of nonconvex quadratically constrained quadratic programming problems has been transformed into a collection of quadratic optimization problems: the minimum and the maximum distances between a point and a Voronoi-like polyhedron has to be found for each problem. The maximum problem can be handled efficiently by vertex enumeration. If simple conditions are satisfied, the polyhedron is unbounded, the maximization problem does not have to be considered anymore and the complexity is known to be in polynomial time. Simulations show that state-of-the-art nonconvex QCQP algorithms developed in Python often fail to answer our question and are computationally greedy. Our new method never fails (is exact) and efficiently find the solution with a time complexity of order $O(10^3)$ smaller in our simulation study. In a further work, we will apply this method to the efficient implementation of a multidimensional changepoint detection algorithm based on pruned dynamic programming.
\appendix
\section{An alternative method}
\label{App1}
In some applications, as for example in Subsection \ref{motivation}, we need to test sequentially the covering. This means that we add at each iteration a new ball in the intersection $\mathcal{I}$. For large $q$, it could be expensive to solve $q$ QP problems at each iteration for minimum and maximum. Another approach consists in detecting an intersection between $\mathcal{I}\setminus \mathcal{U}$ and $\mathbb{S}$ (as soon as $\#\mathcal{I} >0$), where $\mathbb{S}$ is the frontier of the new ball $\mathbb{B}$ (with center $c$ and radius $R$) to add in $\mathcal{I}$. The problem is now centered on the "newest" ball of set $\Lambda$ rather than the successive balls of $V$. Thus, a unique QP problem centered on the ball $\mathbb{B}$ is built and we get a result similar to Theorem \ref{theorem}:

\begin{theorem}~\\
(A) If there exists $(x^-, x^+)\in \Pi \times \Pi$ such that $\|x^--c\|^2 < R^2 < \|x^+-c\|^2$, then $(\mathbb{B} \cap \mathcal{I}) \setminus \mathcal{U} \ne \emptyset$ and $V_n((\mathbb{B} \cap \mathcal{I}) \setminus \mathcal{U}) > 0$;\\
(B) if $R^2 \le \|x^*_{min}-c\|^2$ or $\|x^*_{max}-c\|^2 \le R^2$, then  $\mathbb{S}\cap (\mathcal{I}\setminus \mathcal{U}) = \emptyset$.
\end{theorem}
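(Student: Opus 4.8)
The plan is to replay the development of Section \ref{Section2} verbatim, but with the new ball $\mathbb{B}$ playing the role of the reference ball $\mb{B}$ and with the full union $\mathcal{U}$ replacing the punctured union $\mathfrak{U}$. Concretely, for each $i \in \{1,\dots,p\}$ for which $\mathbb{S}$ meets $S_i = \partial B_i$ I would build the hyperplane and half-space $H_i$ containing $B_i \setminus \mathbb{B}$ exactly as in (\ref{hyperplane}), and for each $j \in \{1,\dots,q\}$ for which $\mathbb{S}$ meets $\mb{S}_j = \partial \mb{B}_j$ the half-space $\mb{H}_j$ containing $\mathbb{B} \setminus \mb{B}_j$; note that here all $q$ balls of $V$ contribute a constraint (none is excluded, since $\mathbb{B} \notin V$), in contrast to the $q-1$ constraints of the main text. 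Writing $\mathtt{P}$ for the resulting open polyhedron and $\Pi = \overline{\mathtt{P}}$ for its closure, the analogs of Lemma \ref{lemmaI} hold unchanged, so by the computation of Corollary \ref{equality} (De Morgan plus the slice identities $\mathbb{S} \cap B_i = \mathbb{S} \cap H_i$ and $\mathbb{S} \cap \mb{B}_j^{\text{c}} = \mathbb{S} \cap \mb{H}_j$) one gets the central set equality $\mathbb{S} \cap (\mathcal{I} \setminus \mathcal{U}) = \mathbb{S} \cap \mathtt{P}$. With $\mathbb{S}^{R+\epsilon}$ the sphere of center $c$ and radius $R+\epsilon$, Proposition \ref{fondprop} then transfers literally: assuming $V_n(\Pi) > 0$, assertion (ii) here is equivalent to the existence of $r > 0$ such that $\mathbb{S}^{R+\epsilon} \cap (\mathcal{I} \setminus \mathcal{U}) \ne \emptyset$ for every $\epsilon \in (-r, r)$.

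Granting this, case (A) is immediate. From the hypothesis of (A), the transferred Proposition \ref{fondprop} yields such an $r > 0$; I would then pick the inner sphere by taking $\epsilon = -r/2$, producing a point $x \in \mathbb{S}^{R - r/2} \cap (\mathcal{I} \setminus \mathcal{U})$. The decisive point---and the only place where the argument departs from the proof of Theorem \ref{theorem}---is the choice of sign: because $\mathbb{B}$ is being inserted into the intersection rather than removed from the union, I select the inner sphere so that $\|x - c\| = R - r/2 < R$ forces $x \in \mathbb{B}$, whereas the main theorem selected the outer sphere to land in $\mb{B}^{\text{c}}$. Consequently $x \in \mathbb{B} \cap (\mathcal{I} \setminus \mathcal{U}) = (\mathbb{B} \cap \mathcal{I}) \setminus \mathcal{U}$, and since $\mathbb{B}$, $\mathcal{I}$ and $\mathcal{U}^{\text{c}}$ are all open, this set is open and nonempty, whence $V_n((\mathbb{B} \cap \mathcal{I}) \setminus \mathcal{U}) > 0$.

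For case (B) I would argue by contraposition exactly as in Theorem \ref{theorem}. The condition $R^2 \le \|x^*_{min} - c\|^2$ (resp. $\|x^*_{max} - c\|^2 \le R^2$) says that no point of $\Pi$ lies strictly inside (resp. strictly outside) $\mathbb{B}$, so hypothesis (ii) of the transferred Proposition \ref{fondprop} fails and hence $\mathbb{S} \cap \mathtt{P} = \emptyset$; by the set equality this reads $\mathbb{S} \cap (\mathcal{I} \setminus \mathcal{U}) = \emptyset$, which is the claim. I expect the only real work to be the bookkeeping that justifies the two literal transfers: namely that replacing $\mathfrak{U}$ by $\mathcal{U}$ costs nothing because $\mathbb{B}$ is not itself a union ball, and that the convexity and limit argument underlying Proposition \ref{fondprop} is indifferent to which ball anchors the spheres $\mathbb{S}^{R+\epsilon}$. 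Once those are checked, the theorem follows with the single sign change highlighted above.
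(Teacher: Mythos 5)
Your proposal is correct and takes essentially the same approach the paper intends: the appendix theorem is stated as the direct analog of Theorem \ref{theorem}, obtained exactly as you do by rebuilding the polyhedron around $\mathbb{B}$ (now with all $q$ union balls contributing constraints), transferring Corollary \ref{equality} and Proposition \ref{fondprop} verbatim, and flipping to the inner sphere $\mathbb{S}^{R-r/2}$ so that the witness point lands inside $\mathbb{B}$ rather than outside the reference ball. Your handling of case (B) by direct contraposition through the equivalence $i) \iff ii)$ together with the set equality is a mild streamlining of the negation argument used for Theorem \ref{theorem}, but it is the same underlying mechanism.
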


In case (B) we can conclude knowing a point $\tilde{x}_k$ for each connected component $C_k$ $(k=1,...,K)$ of $\mathcal{I}\setminus \mathcal{U}$ and testing $\tilde{x}_k \in \mathbb{B}$. If always false, we get $(\mathbb{B} \cap \mathcal{I}) \setminus \mathcal{U} = \emptyset$. The drawback of this approach is the necessity to know the number of connected components in $\mathcal{I}\setminus \mathcal{U}$. However, with the conditions of Theorem \ref{theorem2} the set $\mathcal{I}\setminus \mathcal{U}$ is connected (see the proof) and a unique point in  $\mathcal{I}\setminus \mathcal{U}$ is required $(K = 1)$.\\
Proposition \ref{infeasible} remains the same with $\mathbb{B}$ instead of $\mb{B}$. The detection of the inclusion $\mathcal{I} \subset \mathbb{B}$ is now made possible and is important for sequential problems, since in this case, the ball $\mathbb{B}$ is not added to the set $\Lambda$.

\section*{Acknowledgments}
I would like to thank my colleagues Guillem Rigaill (Evry University), Michel Koskas (AgroParisTech) and Julien Chiquet (AgroParisTech) for relevant comments and for their encouraging attitude regarding this work.

\bibliographystyle{siamplain}
\bibliography{references}
\end{document}